\newcommand{\E}[1]{{\mathbb E}\left[ #1 \right]}
\newtheorem{Theorem}{Theorem}
\pgfplotsset{
	grid style = {
		dash pattern = on 0.025mm off 0.95mm on 0.025mm off 0mm, 
		line cap = round,
		black,
		line width = 0.5pt
	},
	tick label style={font=\small},
	label style={font=\small},
	legend style={font=\footnotesize},
}
\newtheorem{lem}{Lemma}
\begin{document}

	\title{Energy Efficiency in Cache Enabled Small Cell Networks With Adaptive User Clustering}

	\author{\IEEEauthorblockN{Salah Eddine Hajri and Mohamad Assaad }\\
		\IEEEauthorblockA{Laboratoire des Signaux et Systemes (L2S, CNRS), CentraleSupelec,
			Gif-sur-Yvette, France\\
			Email: \{Salaheddine.hajri,\; Mohamad.Assaad\}@centralesupelec.fr}

	}

	\maketitle
	
	\begin{abstract}
		
		Using a network of cache enabled small cells, traffic during peak hours can be reduced considerably through proactively fetching the content that is most probable to be requested. In this paper, we aim at exploring the impact of proactive caching on an important metric for future generation networks, namely, energy efficiency  (EE).  We argue that, exploiting the correlation in  user content popularity profiles in addition to the spatial repartitions of users with comparable request patterns, can result in considerably improving the achievable energy efficiency of the network.  In this paper, the problem of optimizing EE is decoupled into two related subproblems. The first one addresses the issue of content popularity modeling. While most existing works assume similar popularity  profiles for all users in the network, we consider an alternative caching framework in which,  users are clustered according to their content popularity profiles.  In order to showcase the utility of the proposed clustering scheme, we use a statistical model selection criterion, namely \emph{Akaike information criterion} (AIC).  Using stochastic geometry, we derive a closed-form expression of the achievable EE and we find the optimal active small cell density vector that maximizes it. The second subproblem investigates the impact of exploiting the spatial repartitions of users with comparable request patterns. After considering a snapshot of the network, we formulate a combinatorial optimization problem that enables to optimize content placement such that the used transmission power is minimized.
		Numerical results show that the clustering scheme enable to considerably improve the cache hit probability and consequently the EE compared with an unclustered approach. Simulations also show that the small base station allocation algorithm   results in improving the energy efficiency and hit probability.

	\end{abstract}
	

	%
	\IEEEpeerreviewmaketitle
	
	\section{Introduction}

	The global mobile traffic is expected to increase exponentially in the coming years.  This is mainly due to the wide spread of wireless devices and the emergence of video streaming as one of the main contributors in mobile data traffic. Network densification through the deployment  of small base stations (SBSs) was proposed as a  mean  to offload an important amount of traffic from the macro base stations. However, this requires the deployment of a costly infrastructure and a considerable increase in the backhaul link capacity.	Recently, information centric networks are emerging as an efficient technology to offload traffic and reduce the strains on the backhaul. In fact, a substantial part of the mobile traffic is due to several duplicate requests of the same popular contents. Consequently, proactively caching these files in the edge of the network results in enhancing user experience while reducing the needed backhaul link capacity. In addition to satisfying the increasing demand, sustainable development is also a major requirement for 5G networks. In fact, the increasing concern regarding green gazes emission made EE a major key performance indicators for future networks.  Thanks to the improvement of memory devices, proactive caching offers a very practical and energy efficient alternative to network densification.
	
	The idea of caching popular content on the edge of the network is gaining momentum as one of the most promising enablers of future generation networks \cite{informationcentric},  \cite{MATHA}. Based on the observation that the backhaul link is becoming a bottleneck, especially in dense networks, the idea of exploiting memory devices as a substitute to more backhaul capacity proves to be very tempting. Many existing works investigated the fundamental trade-offs in cache enabled networks.  In \cite{algo}, a joint routing and caching problem in small cell networks was considered, taking into account both the constrained storage and transmission bandwidth capacities of the SBSs.  The authors used approximation algorithms with performance guarantees in order to derive a solution that maximizes the content requests that are satisfied by the SBSs. In \cite{limits}, an information theoretic formulation of the caching problem was considered. The authors proposed coded schemes enabling a considerable improvement in peak rate compared to previously known schemes. In \cite{edge}, with a limited backhaul  capacity and proactive caching exploiting context awareness and social networks, the authors showed that the backhaul traffic load can be substantially  reduced.   A hierarchical caching system with two layers of caches was considered in \cite{code}. The authors proposed a coded caching scheme that attains the optimal memory-rate trade-off to within a constant gap. The relation between collaboration distance and interference was studied in \cite{dd} for D2D networks. The authors showed that with enough content reuse, non-vanishing throughput per user can be attained, even with limited storage and delay. Clustering users according to their request pattern was investigated in \cite{cont} with the goal of reducing service delay. The authors showed that the clustering scheme outperforms the unclustered and random caching approach.
	
	EE of cache enabled networks is a fundamental subject that is attracting increasing attention   \cite{rev1,rev4}. 
	The impact of proactive caching on EE was investigated in \cite{EED}. The  key factors that impact the EE of cache enabled networks were studied.  The authors showed that EE can be improved by caching at the base stations when, power efficient cache hardware and sufficient cache capacity are used. In \cite{rev1}, a GreenDelivery framework was proposed in cache enabled small cell networks. Using energy harvesting communications, the authors showed that this framework enables to  reduce energy consumption. In \cite{mar}, the  energy consumption  of cache enabled wireless cellular networks was investigated. The authors studied the  conditions under  which  the  area  power  consumption  is  minimized  while  ensuring  a high  coverage  probability.

	In this paper,   we  explore the impact of the content placement strategy on EE. While most previous works  assume  similar  popularity profiles for all users, we consider an alternative caching framework in which,  users are clustered according to their content popularity profiles \cite{conf_version, cont}. This choice is motivated by the existence of very diverse traffic patterns among users. In fact, the requested content depends on the user social network and interests that can be very different from one person to the other. Assuming a homogeneous content popularity among users can only result in loosing valuable information. Contrary  to  classical  location based   clustering approaches, we choose  to use content based clustering. This  can be  justified by  the fact that content popularity   change slower than user locations. Owing to the different time scales of content popularity and user location changes, the selected cached files,  which depends on the average  popularity distribution per cluster, can be kept  constant for long periods of time regardless of user positions. 
	
	In order to showcase the pertinence of the proposed popularity based clustering scheme, we use a statistical model selection criterion, namely, \emph{Akaike information criterion}.  AIC enables to measure the truthfulness of a given statistical model.  It also addresses the trade-off between the fitness of a statistical model based on maximum likelihood estimation and its complexity which is given by the number of parameters to be estimated. { AIC enables to adapt user clustering  to  any traffic pattern changes since it can detect modifications in the optimal number of clusters. We find that  content popularity based clustering enables a substantial  gain  in term  of cache hit probability and  EE, even when user positions are not taken into consideration.
		Nevertheless, further improvement is possible by exploiting geographic information. Of course, acquiring  information on user location requires a non negligible processing and signaling overhead. Consequently, this information should be leveraged whenever it is available. In the second part of the paper, we develop an optimization framework to exploit any spatial correlation in traffic patterns. }
	
	In this paper, we choose to  tackle the problem  of  file  placement  once content based clustering is done. The choice is motivated by the different time scales according to  which content popularity and user location evolve.
	In fact, while the correlation in content popularity  between users from the same social  group is constant for long periods of time, their location can  change  due to  mobility. This motivates the need to adapt the cached content placement more often than the selected files in order to simplify the management of the network. 	{ In the case of  low mobility, where users do not change positions too often, it makes sense to adapt the files placement based on location information.  This is the case for users in confined areas (office, university campus....).}
	To this end, we propose an optimization framework that enables to exploit any spatial correlation between users with comparable popularity profiles.  We consider a setting in which  user and SBS  location is known based on a  snapshot of their  Poisson point processes (PPP).
	The proposed combinatorial problem aims at associating each individual SBS with the files of a given cluster. This optimization 
	enables a more energy efficient scheme that exploit the spatial correlation in user traffic pattern in order to reduce the average consumed power. It also showcases another interesting advantage of content based clustering. In fact, the clustering that is done on the users enables also to group the files accordingly. Consequently, the complexity of the resulting optimal file placement  problem is lower since the search space is reduced from the whole file catalog to groups of file of approximately equal total size. This considerably simplifies the management of the caching system compared to  existing work on location based optimization where, the complexity of the formulated problems is proportional  to  the  number of files.	
	\subsection{ Contribution and outcomes}
	{The main contributions of our work are presented as follows:
		\begin{enumerate}
			\item A clustering framework for caching: 
			Given  heterogeneous user profiles, we propose a content popularity based clustering scheme. In order to achieve an efficient user grouping, we use the Akaike Information Criterion. This  allows to effectively estimate the  number of clusters  and the associated average popularity  profiles. 
			\item Optimal active SBSs density: We derive a closed form expression of the achievable
			EE. We then optimize the achievable EE with respect to the density vector of active SBSs. This results in improving the achievable EE even  when user positions are not taken into consideration. This is useful in practical scenarios where acquiring user locations requires  substantial signaling.
			\item Optimizing small base station allocation: 
			When information about user location is available, further improvement  of  EE can be achieved by optimizing the SBSs allocation to the different clusters. We formulate a combinatorial content placement problem that enables to 
			adapt the allocation of cached content based on user location. 
		\end{enumerate}}
		The paper is organized as follows: We describe the system model in Section II.  User clustering  will be investigated in Section III.  EE will then be addressed in section IV. In section V, we present the SBSs allocation algorithm.  Finally, in Section  VI, numerical results are presented. 
		\section{ SYSTEM MODEL AND PRELIMINARIES}	
		\subsection{Network Model}		
		We consider a small cell network deployed over a disc with  radius $R_n$. The SBSs  are spatially distributed  according to a homogeneous Poisson point process  $ \phi_s$ with density $\lambda_{s_{max}}$. In this paper, the available SBSs can be in idle or active modes. The density of active SBSs is given by $\lambda_{s}$ such that $\lambda_{s} \leq \lambda_{s_{max}}$.
		We  consider an orthogonal frequency-division multiple-access (OFDMA) system where,  users served by the same SBS, are scheduled on orthogonal  resources. Consequently, each user will be subject to interference coming  from  users served  by  other SBSs.
		The  users are also distributed  in $\mathbb{R}^2$  according to an independent homogeneous PPP
		$ \phi$ with density $\lambda$ such that,  $\lambda>>\lambda_{s_{max}}$.  The average number of users in the network is then given by $U=\lambda \pi R^2_n$.
		Each user is equipped with a single antenna and is allowed to  communicate with  any SBS within a radius $R$. This restriction enables to control the level  of interference.  We consider that $R$ is defined so that each user is covered with high probability  by more than one SBS.
		We consider that a   packet can be successfully transmitted and decoded if and  only if  SINR  $>\theta$. This means  that, if the SINR is lower than the threshold $\theta$, the link undergoes an outage and the transmission fails.	
		A general power law  pathloss model is used where, the power decay is given by $r^{-\alpha}_{us}$.  $r_{us}$ represents the distance between user $u$ and its serving SBS $s$ and $\alpha >2$ denotes the  pathloss exponent.The wireless channel  from  user $u$ to the SBS $s$ is then given by:	
		\begin{align}\tag{1}
		& \nonumber{g}_{ us } = \sqrt{r^{-\alpha}_{us}} {h}_{ us }, \nonumber
		\end{align}	
		where ${h}_{ us }$ represents the small scale fading coefficient modeled as  Rayleigh fading i.e., $CN\left(0,1 \right)$ distributed random variable. 
		We consider that  the transmit power, used in both uplink and downlink, is defined  according to channel inversion power control \cite{power}. This  is done so that the transmit power compensates the pathloss in order to keep the average signal power  at the  receiver (i.e., the SBS or the user terminal)  equal to a certain constant value $\rho_0$. The transmit power used by user $u$ to communicate with SBS $s$, according to channel inversion power control, is given by: $\rho_{us}= \rho_0 r^{\alpha}_{us}$.
		The channel inversion power control will  ensure a limitation of the interference level since the power received at  any base station from a typical user is upper bounded by $\rho_0 R^{\alpha}$, where $R$ denotes the  maximum communication radius.
		Controlling the level of interference,   in both uplink and  downlink, is a vital  factor that  guarantees {an  EE} gain \cite{EED}.
		\subsection{User scheduling and caching strategy}	
		We  consider  a file catalog $C$ containing $F$  files with different sizes. Each file $i$ has a size of $L_i$ bits.	
		In this paper, we consider that the users  have  heterogeneous file popularity distributions. Each user $u$ is associated with a  popularity vector $P_u=\left[ p_{1u}...p_{Fu}\right] $, where $p_{iu}$ denotes the probability that user $u$  requests file $i$ from the catalog.  We consider that  these probabilities change slowly over time and that they are previously known by the network.  Estimating the popularity distributions can be performed by learning from   previously recorded requests \cite{bandit}. {  In this work, we limit our analysis to the case of perfectly known popularity distributions. The study of the impact of estimation error in popularity distributions is considered in future work.}
		Although users have heterogeneous popularity profiles, we assume that they  can be grouped according to their interest into $N_c$ clusters. This means that the users, forming each cluster, have correlated request patterns.  Meaning that the distance between their content popularity vectors is small.
		Each SBS is equipped with a  caching capacity of $M$ bits. 
		{ Each individual SBS  fills its memory device with the  most popular files from a given cluster. In each cluster,  the most  popular files are  selected based on the average  of the  popularity vectors associated with the users forming this cluster. Therefore, appropriately clustering  the users based on the similarity in their  popularity distributions is of  paramount importance.}
		In this paper, not all SBSs are required to  be active. We consider the density vector of active SBSs  $\Lambda_s \in \mathbb{R}^{N_c \times 1}$, where each of its coefficients $\lambda_{sk},\; k=1..N_c$ represents the density of active SBS caching the most popular files of cluster $k$. $\Lambda_s $ is defined such that $\Lambda_s^{\dagger} \mathbf{1} =\lambda_s$. 
		{ Each user  looks for the requested file in the cache of the SBSs within a radius $R$. The user starts with the closest SBS  from his own cluster. If the requested file  is available in a cache  within this distance,  a cache hit event occurs and the user will associate with the closest SBS storing the requested file. In the event of a cache miss,  the user, simply, associates with the nearest SBS from its corresponding cluster and the requested content will be retrieved from the core network  through the backhaul. If a user cannot find an SBS from its own cluster within  a radius of $R$, it  will only communicate with SBSs from other clusters within radius $R$, in the case of  a cache hit event.  An example of the considered model  is represented in Figure $1$ with three  popularity based clusters  represented, each,  by a  color}.
		\begin{figure}[!htb]
			\centering	
			\includegraphics[width=12cm,height=6cm]{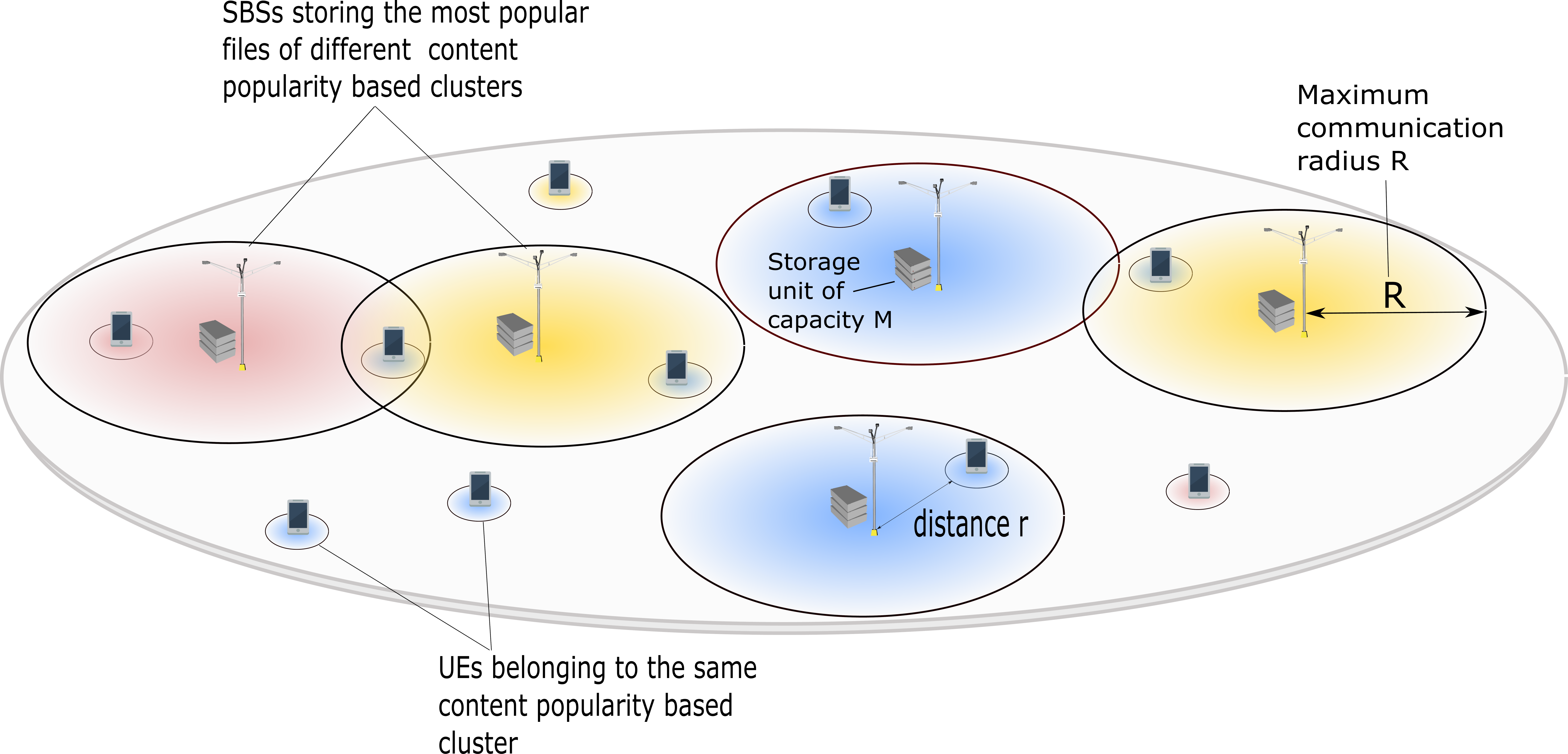}
			\label{System Model}
			\captionsetup{font=footnotesize}
			\caption{ {System Model}}
		\end{figure}
		
		\section{Information theoretic approach to user  clustering}
		Proactive caching systems require an efficient characterization of content popularity in order to  correctly predict the files that are  most likely to be requested. While most of the existing work  assume similar popularity distributions for all users, we adopt a content based clustering approach. We argue that clustering users according to their popularity distributions  enables to better assess social similarities \cite{conf_version, cont} and, consequently, devise a more efficient caching system. 	
		In fact, supposing that all users in the  network  have similar popularity distributions  means that the resulting  statistics are just an average of content  popularity over all  social groups. This  leads to  neglecting the diversity of social behavior. 
		Contrary to traditional location based clustering methods, content aware clustering enables to identify the main request patterns in the network which leads to a better understanding of user preferences. 
		In this work, content popularity based  clustering is considered. Users are grouped such that, the correlation between the  popularity profiles of users in the same cluster is maximized.  {This correlation is characterized by the euclidean  distance between their content popularity vectors.}
		While content based clustering was proposed in \cite{cont} using a spectral approach, we choose to  adopt an information theoretic method, namely, Akaike information criterion. AIC  allows to  efficiently estimate the number of  clusters and to  assess the information loss that results from assuming a single popularity distribution.
		\subsection{Cluster estimation: Akaike information criterion}
		{ 	In the  considered setting, the  users have heterogeneous popularity profiles. However, the different social relations and  interactions may result in  some correlation in user request patterns.   Consequently, content popularity based clustering is used in order to  minimize  the divergence among user content popularity distributions in each cluster.  The number of content popularity based clusters is unknown a priori and should be estimated. Allowing the system to estimate this parameter periodically or whenever a substantial  change in user interest is recorded, allows the network to cope with any modification in user request pattern.} 
		
		In order to  estimate the number of clusters, we use AIC \cite{akaike}  as a statistical model  selection criterion. It allows to assess the  quality of a statistical models for  a given set of data. The data set to be modeled in our case is the collection of user content popularity distributions. Using AIC enables to  estimate the expected Kullback-Leibler discrepancy between the data generating model and any candidate statistical model. 
		It also  addresses the trade-off between the fitness of the statistical model, based on maximum likelihood estimation, and its complexity, which is given by the number of model-characterizing parameters to be estimated.
		
		In our case, we aim at modeling the distribution that generates the user's popularity vectors. 
		We consider the true generating distribution  $A(P_1,...,P_U)=\prod_{u=1}^{U} \mathbb P_u(P_u)$, where  $\mathbb P_u(P_u)$ is the probability  that  user $u$ has a popularity vector $P_u$. We assume that  $A(P_1,...,P_U)$ results from the aggregating of  $N_c$ user clusters where, $N_c$  denotes the true number of clusters.
		Let $\xi_{N_i}, i \in [c_{min},..,c_{max}] $ be a set of  approximation  models. 	Each approximation  model $\xi_{N_i}$ is   characterized  by $N_i $ clusters and a popularity generating distribution $ \prod_{u=1}^{U} \mathbb P_u(P_u|N_i)$.  
		The popularity generating distribution depends on the  number of clusters $N_i$. In fact,  the average and  variance of content popularity vectors in each cluster depend, primarily, on $N_i$.
		The Kullback-Leibler information, which characterizes the  information lost when an approximating model is used, can be written, $\forall \; \xi_{N_i}, i \in [c_{min},..,c_{max}] $, as:
		\begin{align}\tag{2}
		& \nonumber d\big(N_i,N_c \big) = \\
		\nonumber	& \int_{[0,1]^F}^{}  \prod_{u=1}^{U} \mathbb P_u(P_u) \;\text{log}\big( \frac{\prod_{u=1}^{U} \mathbb P_u(P_u)}{ \prod_{u=1}^{U} \mathbb P_u(P_u|N_i)}\big)  \text{d} \;p_1 ... \text{d} \;p_F,\nonumber
		\end{align}
		After simplification, the  discrepancy between the two models is given by \cite{aic}:
		\begin{align}\tag{3}
		& \nonumber   d\big(N_i,N_c \big)  =  \mathbb E\left\lbrace -2 \;\text{log}\;( L_{\xi_{N_i}}(N_i| P_u, u=1...U ))\right\rbrace, \nonumber
		\end{align}
		where $  \mathbb E\left\lbrace .\right\rbrace $ denotes the expectation with respect to the available data, which is the collection of all  users popularity  profiles, knowing $N_i$.   $L_{\xi_{N_i}}(N_i| P_u, u=1...U )$ denotes the likelihood of having $N_i$ clusters, knowing the popularity profiles of the users (the expression will be given later on in this section).\\ 
		In \cite{aic}, Akaike  noted that $-2 \;\text{log} \;( L_{\xi_{N_i}}(N_i| P_u, u=1...U ))$ is a  biased  estimate of the average discrepancy. After bias adjustment,  the expected  discrepancy can be approximated by:
		\begin{align}\tag{4}
		& \nonumber \mathbb E\left\lbrace d\big(N_i,N_c \big)\right\rbrace \approx   2k_i - 2 \text{log} \;( L_{\xi_{N_i}}(N_i| P_u, u=1...U )).\nonumber
		\end{align}
		Here $k_i$ denotes the number of characterizing parameters in model $\xi_{N_i}$ and  $  \mathbb E\left\lbrace .\right\rbrace $ denotes the expectation with respect to the available data.	The   expected value of the discrepancy is asymptotically equal to the expected AIC of the considered statistical model which  is  given by:
		\begin{align}\tag{5}
		& \nonumber  \text{AIC}(\xi_{N_i})= 2k_i-2 \text{log} \;( L_{\xi_{N_i}}(N_i| P_u, u=1...U )).	\nonumber 		
		\end{align}
		AIC  allows to  assess the truthfulness of any considered statistical model, and in our case, allows to estimate the number of content based clusters together with the characterizing parameters of each one. Each cluster is  characterized by the average file popularity distribution and its variance within the cluster.
		In order to  approximate the process  generating the users probability vectors, 	we consider a set of  statistical models
		$\varXi=\left\lbrace \xi_{{N_c}_{min}} ...\xi_{{N_c}_{max}}\right\rbrace $ where, $\left\lbrace {N_c}_{min}...{N_c}_{max}\right\rbrace $ represents the range over which the search for the true  number of clusters will be carried out. Each of the considered models will be  typified by a number of defining parameters. In our case, each considered model $\xi_{N_i}$ is  characterized by 	$N_i \times (F+1)$  parameters,  $ N_i \times F $  representing the average file popularity in each cluster  and $N_i$ variance estimates.
		In this paper,  the likelihood $ L_{\xi_{N_i}}(N_i| P_u, u=1...U )$  is computed based on a Gaussian Mixture model. This is a common assumptions for data generating models \cite{xmean}.  The log likelihood  function $ \text{log} \;( L_{\xi_{N_i}}(N_i| P_u, u=1...U ))$ is computed after  clustering user with the assumption that they can be grouped into $N_i$ clusters.  $ \text{log} \;( L_{\xi_{N_i}}(N_i| P_u, u=1...U ))$ can be written as follows:
		\begin{align}\tag{6}
		\nonumber & \text{log}\;( L_{\xi_{N_i}}(N_i| P_u, u=1...U )) =\\
		\nonumber & \sum_{u=1}^{U} \big( \text{log}(\frac{1}{\sqrt{2\pi} \hat{\sigma}_{\psi(u)}^F})-\frac{\lVert P_u - \hat{P}_{\psi(u)}  \rVert^2}{2\hat{\sigma}_{\psi(u)}^2} + \text{log}(\frac{U_{\psi(u)} }{U}) \big), \nonumber 
		\end{align}
		where $\psi(u)$ represents the index of the cluster to which user $u$ is assigned.	$\hat{P}_{\psi(u)} $ denotes the average popularity vector  in cluster $\psi(u)$. $U_{\psi(u)}$ refers to the number of users in cluster $\psi(u)$. $\hat{\sigma}_{\psi(u)}^2$ denotes  the  variance of content popularity vectors in cluster $\psi(u)$ and is given by: 
		\begin{align}\tag{7}
		& \nonumber \hat{\sigma}_{\psi(u)}^2= \frac{1}{(U_{\psi(u)})} \sum_{j \in U_{\psi(u)}}^{} \lVert P_j - \hat{P}_{\psi(u)}  \rVert^2 .\nonumber
		\end{align}
		Then the log-likelihood function can be written as:
		\vspace*{-2mm}
		\begin{align}\tag{8}
		\nonumber &	\text{log} \;( L_{\xi_{N_i}}(N_i| P_u, u=1...U ))  =\\
		\nonumber &\sum_{k=1}^{N_i} 
		-\frac{U_{k}}{2}( \text{log}(2 \pi) -1 + 2 \text{log}( \frac{U_{k}}{U})- F \text{log}(\hat{\sigma}_k^2) ) .\nonumber
		\end{align}
		The resulting AIC for model $\xi_{N_i}$ is given by:
		\begin{align}\tag{9}
		\nonumber &	\text{AIC} (\xi_{N_i})  =\\
		\nonumber & 2 N_i  (F+1) + \sum_{k=1}^{N_i} 
		{U_{k}}( \text{log}(2 \pi) -1 + 2 \text{log}( \frac{U_{k}}{U})- F  \text{log}(\hat{\sigma}_k^2) ) .\nonumber
		\end{align}
		The model that best describe the user popularity vectors is the one that minimizes the AIC and, consequently, the discrepancy.
		In order to find the best model, the user are clustered according to their content popularity vectors using the $K$-mean algorithm \cite{mean}, while assuming  different numbers of clusters from  a search range $\left\lbrace {N_c}_{min}...{N_c}_{max}\right\rbrace $. The selected model  $\xi_{\text{AIC}}$ verifies:
		\begin{align}\tag{10}
		& \nonumber \xi_{\text{AIC}}=\underset{\xi \in \varXi}{\text{argmin}} \; \text{AIC}(\xi).\nonumber
		\end{align}
		The selected model which minimizes the AIC, strikes the best trade-off between fitness and complexity. This results in a truthful modeling of content popularity based clusters. The resulting model guarantees minimum discrepancy among the request patterns of the users within each cluster. We now provide the detailed description  of the content  based user clustering algorithm.
		\subsection{User clustering algorithm}		
		The proposed { content popularity} based clustering algorithm starts by defining a search interval $\left[{N_c}_{min}...{N_c}_{max}  \right] $.  The algorithm  begins by assuming the existence of  ${N_c}_{min}$ clusters. It  clusters the users accordingly using the $K$-mean algorithm \cite{mean}.  
		$K$-mean  allows to  assign each user to the  cluster with the nearest centroid  which results in  minimizing the disparity between users behaviors in the same cluster. The popularity profile of the cluster is then defined as the average of the  popularity vectors of all users in the cluster as:
		\begin{align}\tag{11}
		& \nonumber \hat{P}_{k} = \frac{\sum_{u ,\psi(u)=k }^{} P_u  }{U_{k}}.\nonumber
		\end{align} 
		{Each cluster $k$ is then associated with a vector $\hat{P}_{k}=\left[ \hat{p}_{1k}...\hat{p}_{Fk}\right] $, where  $\hat{p}_{fk}$ denotes the average popularity of file $f$ in cluster $k$.}
		Once  users are assigned to their respective clusters,  AIC is computed.
		The number of clusters is incremented by adding a new centroid. The AIC is then recomputed until reaching a minimum.
		AIC is decreasing as a function of the number of clusters until reaching a minimum in the most accurate estimate. The AIC will then start increasing because of model complexity. 
		Since the goal of the clustering is to reduce  the divergence among users from the same cluster, a new centroids is added, at each step, in the cluster with the greatest popularity  variance.
		The new center is selected as the user having the largest distance from the mean popularity vector of its cluster.
		This  allows  to  reduce the  discrepancy in user traffic  pattern.
		The detailed clustering algorithm can be written as the following:
		\begin{center}
			\begin{tabular}{ l  }
				\textbf{Content-popularity based user clustering algorithm}\\
				\hline
				\hline
				\emph{Initialize}: Define search interval $\left[{N_c}_{min}...{N_c}_{max} \right]$, Set\\  $K={N_c}_{min}$
				Choose randomly the first ${N_c}_{min}$ centroids\\ from the available users\\
				$1. $ Run $K$-mean algorithm and compute $\mathrm{AIC}(\xi_{K})$\\
				$2. $ Choose the user having the largest distance from its\\ \quad centroid in  cluster $k^*$  with  the greatest variance \\ \quad { ($k^*=\underset{k=1...K}{\text{argmax}} \; \hat{\sigma}_{k}^2$)}\\ 
				$3. $ Add a centroid with the popularity profile of \\ \quad the chosen user and set $K=K+1$\\
				$4. $ {Run step $1$ to step $3$ }until AIC starts to increase.\\
				$5. $ Choose the model which minimizes the AIC 	and \\ \quad cluster the users accordingly\\ 	
				\hline
				\hline
			\end{tabular}
		\end{center}
		\vspace*{5mm}
		
		{Once content popularity  based clustering is performed, the cached files of each cluster are selected based on its average   popularity vector, which is given in $(11)$. 
			For each $\{k=1...N_c\}$,  the files in the catalog are ordered in a decreasing order of popularity according to $\hat{P}_{k}$. 
			The set of cached files, in each cluster, $\{\Delta_k, k=1...N_c\}$ is then  selected as the most popular files,  according to $\{\hat{P}_{k}, k=1...N_c\}$, whose aggregate size is at maximum $M$.
			Apart from the  maximum size  constraint, we impose no restrictions on the set of cached files. 
			Consequently, there may be some overlapping between the cached files of different clusters. Meaning that the same file can be selected in the cached sets of different clusters ($\Delta_k \cap \Delta_{j} \neq  \emptyset, \text{for some}  \; k \neq j$).
			Files that are selected by different  clusters are very popular across  user. Consequently, it makes sense to increase the number of cached copies in the network. Given the considered model in section $II$, allowing overlapping between the cached files of different clusters provides better performance.
			Since user preference can change over time,   the algorithm in Table $I$ can be executed periodically or whenever substantial  popularity  profile  modification is recorded. This  allows the caching system  to  adapt the selected files  accordingly. In order to investigate the performance of the proposed scheme,  we assess its impact on the achievable EE of the system. }
		\section{EE with content popularity clustering}
		Predicting which content  is most likely to be requested and caching it
		in the edge can reduce the latency and backhaul load  as well as increasing the overall throughput.  Proactive caching also proved to be an effective technology that can improve another very important metric in future generation networks, namely, EE \cite{EED}. In what follows, we investigate the EE of cache enabled small cell networks with content popularity based user clustering. 
		We consider the downlink of the cache enabled network. Without loss of generality,  we  concentrate on a reference user located at the origin of the plane. The EE of the network is given by  the ratio between the average achievable spectral efficiency and the average consumed power \cite{EED}.
		\begin{align}\tag{12}
		&\nonumber \Sigma=\frac{SE}{\rho^c_{total}},\nonumber
		\end{align} 
		where $\Sigma$ denotes the average energy efficiency, $\rho^c_{total}$ denotes the average consumed power in the cache enabled small cell network and  $SE$ denotes its average achievable spectral efficiency.
		In order to  derive the expressions of  $SE$ and $\rho^c_{total}$ and, consequently, the achievable EE, we need to start by finding  the expression of the cache hit probability. 
		\subsection{Cache hit Probability }
		According to the considered system model, the cache hit probability refers to the probability of finding a requested file in the cache of a SBS within radius $R$ from a given user \cite{modeling}. 
		Our context is different  from the one in \cite{modeling}, since the users are clustered and  the SBSs cache different files depending on their associated cluster. Considering the proposed clustering model, the cache hit probability can be expressed as follows (the derivations are skipped for brevity):
		\begin{align}\tag{13}
		\nonumber \mathbb P \left\lbrace  hit\right\rbrace  = \frac{1}{U} \sum_{k=1}^{N_c} \sum_{u=1}^{U} \big( \sum_{i \in \Delta_k}^{} p_{iu} \big) \big( 1- e^{-  \lambda_{sk}\pi R^2}\big),  \nonumber
		\end{align}
		where  $\Delta_k$ represents the set of the  most popular files of cluster $k$ that fills the SBS caching capacity.
		This equation denotes the probability of finding of at least one SBS with the requested file stored in its cache   within a radius $R$ from a given user.  The density of SBS caching the most popular content of a cluster $\{k, k=1...N_c\}$ is given by $\lambda_{sk}$   and, their average number $N_{sk}$ is given by $N_{sk}=\lambda_{sk}\pi R^2_n$.  The densities $\{\lambda_{sk}, k=1...N_c\}$ are such that $\sum_{k=1}^{N_c}\lambda_{sk}=\lambda_{s}$. One major upside of content popularity based user clustering is content diversity. While each SBS caches the most popular files of  only one cluster,  users can request any of the  cached files  in  SBSs within radius $R$, which can be fetched without additional load on the backhaul. 
		In fact, a given user can communicate with  the closest SBS caching the  files of a cluster different from his own whenever the requested content is already cached.
		Consequently, compared with the classical approach of caching the same popular content everywhere,  the users covered by several SBSs from different clusters will see an increase in their  cache hit probability.	
		\subsection{Average total consumed power}
		In order to gain a useful insight into the achievable EE  and capture the fundamental tradeoffs, we extend the power model in \cite{EED}.
		The average consumed total power in the considered network with caching capabilities can be modeled as follow:
		\begin{align}\tag{14}
		\nonumber	& \rho^c_{total}= \mathbb E \left\lbrace \rho_I\right\rbrace + 
		\mathbb E \left\lbrace \rho_T\right\rbrace +\mathbb E \left\lbrace \rho_f\right\rbrace,\nonumber	
		\end{align}
		where $\rho_I$, $\rho_T$ and $\rho_f$ denote,  respectively, the power consumed by the infrastructure of active base stations, the total transmit power and the used power to fetch files from the hard disc or the core network. The expectation  $\mathbb E \left\lbrace .\right\rbrace$ is taken over  the users and SBSs PPPs. \\	
		The average  power consumed by the infrastructure is given by:
		\begin{align}\tag{15}
		\nonumber	& \mathbb E \left\lbrace \rho_I\right\rbrace = \rho \lambda_s \pi R^2_n,	\nonumber
		\end{align}
		{	$\rho$ and  $\lambda_s \pi R^2_n$ denote, respectively, the fix operational charge consumed by an active SBS and the average number of active SBSs. The average power used to retrieve a file either over the backhaul, when a cache miss event  occurs, or from a SBS cache is given by:
			\begin{align}\tag{16}
			\nonumber & \mathbb E \left\lbrace \rho_f\right\rbrace =\lambda_s \pi R^2_n\big(  \rho_{hd}  \mathbb P \left\lbrace hit\right\rbrace +  \rho_{bh} \big( 1-\mathbb P \left\lbrace hit\right\rbrace\big) \big),\nonumber  
			\end{align}
			where $\rho_{hd}$ denotes the power needed to retrieve data from the local hard disk of a small  base station when the  requested content is already cached and a cache hit event occurs. $\rho_{bh}$ denotes the power needed to retrieve data from the core network through the backhaul when a cache miss event  occurs.
			Owing to channel inversion power control, the power used for transmission depends on the distance between  the  communicating SBS and users. 
			Here we consider $\Upsilon_k $ as the set of users associated with cluster $k, \forall k=1..N_c$.
			{	Each user looks for the requested file in the cache of the SBSs within a radius $R$, starting with the closest SBS  from his own cluster. 
				If the requested file  is available in a cache  within this distance,  a cache hit event occurs and the user will associate with the closest SBS storing the requested file. In the event of a cache miss,  the user  associates with the nearest SBS from its corresponding cluster and the requested content is retrieved from the core network  through the backhaul. If a user cannot find an SBS from its own cluster within  a radius  $R$, it   only communicate with SBSs from other clusters within radius $R$, in the case of  a cache hit event.} The average total transmission power is given by:
			\begin{align}\tag{17}
			\nonumber & \mathbb E \left\lbrace \rho_T\right\rbrace = \frac{\lambda_s \pi R^2_n}{U} \sum_{k=1}^{N_c}  \sum_{u \in \Upsilon_k}^{}  \big(    \mathbb E \left\lbrace \rho_k\right\rbrace\\
			\nonumber & + \sum_{j \neq k}^{} \sum_{i \in \Delta_s}^{} p_{iu}  ( 1- e^{-  \lambda_{sj}\pi R^2})
			(\mathbb E \left\lbrace \rho_j\right\rbrace  -\mathbb E \left\lbrace \rho_k\right\rbrace)\big),\nonumber
			\end{align}
			where  $E \left\lbrace \rho_k\right\rbrace$ denotes the average transmit power that a typical user utilizes  when communicating with  SBSs associated with cluster $k, \forall k =1..N_c$. }
		Finally, the  expression of the  average consumed total power is derived by including the expressions  of $\mathbb E \left\lbrace \rho_k\right\rbrace$, $\forall k=1..N_c$.
		\begin{lem}  		
			The average consumed total power in the considered network with caching capabilities and content-popularity based user clustering can be modeled as follows:
			\begin{align}\tag{18}
			& \rho^c_{total}\nonumber  =\lambda_s \pi R^2_n(  \rho_{hd}  \mathbb P \left\lbrace hit\right\rbrace +  \rho_{bh} \big( 1-\mathbb P \left\lbrace hit\right\rbrace\big)  +\rho)\\
			\nonumber & +\frac{\lambda_s \pi R^2_n}{U}  \sum_{k=1}^{N_c}  \sum_{u \in \Upsilon_k}^{}( \frac{\rho_0 \gamma(\frac{\alpha}{2}+1,\pi \lambda_{sk} R^2)}{(\lambda_{sk}\pi)^{\frac{\alpha}{2}} }\\
			\nonumber &  + \sum_{j \neq k}^{} \sum_{i \in \Delta_s}^{} p_{iu}  ( 1- e^{-  \lambda_{sj}\pi R^2})(\frac{\rho_0 \gamma(\frac{\alpha}{2}+1,\pi \lambda_{sj} R^2)}{(\lambda_{sj}\pi)^{\frac{\alpha}{2}} }\\
			\nonumber &  -\frac{\rho_0 \gamma(\frac{\alpha}{2}+1,\pi \lambda_{sk} R^2)}{(\lambda_{sk}\pi)^{\frac{\alpha}{2}} })).
			\end{align}	
		\end{lem}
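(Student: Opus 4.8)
The plan is to assemble $\rho^c_{total}$ from the three-term decomposition in $(14)$ and then reduce the only non-elementary ingredient, the per-cluster average transmit power $\mathbb{E}\{\rho_k\}$, to closed form. The first two terms are immediate: adding the infrastructure power $(15)$ and the fetching power $(16)$ and factoring out the common prefactor $\lambda_s \pi R^2_n$ produces exactly the first line of $(18)$, namely $\lambda_s \pi R^2_n(\rho_{hd}\mathbb{P}\{hit\} + \rho_{bh}(1-\mathbb{P}\{hit\}) + \rho)$. No stochastic-geometry argument is needed here; this is purely a regrouping of $(14)$--$(16)$.

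The heart of the proof is the evaluation of $\mathbb{E}\{\rho_k\}$, the expected power a typical user spends when associating with the nearest SBS caching the files of cluster $k$. By the independent thinning property of the PPP, the SBSs serving cluster $k$ form a homogeneous PPP of intensity $\lambda_{sk}$, so the distance $r$ from the reference user at the origin to the nearest such SBS has the standard nearest-neighbour density $f(r) = 2\pi\lambda_{sk} r\, e^{-\lambda_{sk}\pi r^2}$, obtained by differentiating the complementary void probability $e^{-\lambda_{sk}\pi r^2}$. Under channel-inversion power control the instantaneous transmit power is $\rho_0 r^\alpha$, and since a user communicates only within radius $R$, I would write
\begin{align}
\mathbb{E}\{\rho_k\} = \int_0^R \rho_0\, r^\alpha\, 2\pi\lambda_{sk} r\, e^{-\lambda_{sk}\pi r^2}\, \mathrm{d}r. \nonumber
\end{align}
The substitution $u = \lambda_{sk}\pi r^2$ turns this into $\frac{\rho_0}{(\lambda_{sk}\pi)^{\alpha/2}}\int_0^{\pi\lambda_{sk} R^2} u^{\alpha/2} e^{-u}\,\mathrm{d}u$, which by the definition of the lower incomplete gamma function is $\frac{\rho_0\,\gamma(\frac{\alpha}{2}+1,\, \pi\lambda_{sk} R^2)}{(\lambda_{sk}\pi)^{\frac{\alpha}{2}}}$, precisely the per-cluster term appearing in $(18)$.

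With $\mathbb{E}\{\rho_k\}$ in closed form, the final step is to substitute this expression, together with the analogous $\mathbb{E}\{\rho_j\}$ for the cross-cluster contributions, into the transmit-power formula $(17)$. The own-cluster term $\mathbb{E}\{\rho_k\}$ and the cross-cluster correction $\sum_{j\neq k}\sum_{i\in\Delta_s} p_{iu}(1-e^{-\lambda_{sj}\pi R^2})(\mathbb{E}\{\rho_j\}-\mathbb{E}\{\rho_k\})$ then carry over verbatim, with each $\mathbb{E}\{\rho_\ell\}$ replaced by its gamma-function value, yielding the remaining three lines of $(18)$. Summing this transmit-power expression with the first line completes the derivation.

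I expect the main obstacle to lie in justifying the displayed form of $\mathbb{E}\{\rho_k\}$ rather than in the integration itself. One must argue that the truncated, \emph{unnormalised} integral up to $R$ is the correct model for the expected transmit power, which rests on the standing assumption that each user is covered by at least one own-cluster SBS within radius $R$ with high probability, so that no conditioning on the non-void event is required. The cross-cluster terms additionally require the interpretation that a user switches to a cluster-$j$ SBS only upon a cache hit there; this is already encoded by the hit factor $(1-e^{-\lambda_{sj}\pi R^2})$ and the popularity weights $p_{iu}$ inherited from $(17)$, so once $(17)$ is granted the remaining manipulation is mechanical.
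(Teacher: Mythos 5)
Your proposal follows essentially the same route as the paper's Appendix A: decompose $\rho^c_{total}$ via $(14)$--$(16)$, compute $\mathbb{E}\{\rho_k\}$ from the nearest-neighbour distance density $2\pi\lambda_{sk} r\, e^{-\lambda_{sk}\pi r^2}$ truncated at $R$ to obtain the incomplete-gamma form, and substitute into $(17)$. The derivation is correct, and your remark about the unnormalised truncated integral makes explicit an assumption the paper leaves implicit, but it does not constitute a different argument.
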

		\begin{proof}
			See Appendix A.
		\end{proof}	
		Following the same reasoning, the average consumed total power in the network  with no proactive caching capabilities at the  SBSs, is given by:
		\begin{align}\tag{19}
		\nonumber	\rho^{nc}_{total}= \lambda_s \pi R^2_n \rho_{bh} +\rho \lambda_s \pi R^2_n + \lambda_s \pi R^2_n\frac{\rho_0 \gamma(\frac{\alpha}{2}+1,\pi \lambda_{s} R^2)}{(\lambda_{s} \pi)^{\frac{\alpha}{2}} }.\nonumber
		\end{align}
		$\rho^{nc}_{total}$ is taken into consideration  in order to guarantee an improvement in the  average EE of the network,  when proactive caching is implemented.
		\subsection{Average Spectral Efficiency}
		In order to  derive the achievable EE, the expression of the average spectral efficiency should be derived.  
		The downlink SINR for a  user $u$ taken at the origin is given by:
		\begin{align}\tag{20}
		\nonumber &  \text{SINR} = \frac{\rho_0 \left\|  h_u\right\| ^2}{\sigma^2 + \sum_{k=1}^{N_c} I_{k}},\nonumber	
		\end{align}
		where $I_k,\; \forall k=1..N_c$ represents the interference coming from SBS from cluster $k$ given by $ I_{k} =   \sum_{i \in \phi_{sk}}^{} \rho_{ik} \lVert h_{ui}\rVert^2 r^{-\alpha}_{ui} $. 
		Here  $\phi_{sk}$ denotes the  set of SBSs associated with cluster $k,\; k=1..N_c$. $\rho_{ik}$ refers to the power used in the downlink by SBS $i$ from cluster $k$. $\sigma^2$ represents the  noise power.
		In order  to compute the average spectral efficiency of the  network, first we need to derive the achievable coverage probability which is given  in the  following Lemma:
		\begin{lem} 
			The downlink coverage probability is given by:
			\begin{align}\tag{21}
			\nonumber &  \mathbb P \left\lbrace \text{SINR}  \geq \theta \right\rbrace = exp (- \frac{\theta}{\rho_0} \sigma^2) \times\\
			\nonumber &	\prod_{k=1}^{N_c} exp \big(-{\pi \lambda_{sk} }\Gamma(1+\frac{2}{\alpha}) \Gamma(1-\frac{2}{\alpha}) (\frac{\theta}{\rho_0})^{\frac{2}{\alpha}} \E{\rho^{\frac{2}{\alpha}}_k} \big).\nonumber	
			\end{align}
		\end{lem}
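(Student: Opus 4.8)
The plan is to use the standard Rayleigh-fading argument from stochastic geometry. Since the desired signal power is $\rho_0\|h_u\|^2$ with $\|h_u\|^2$ exponentially distributed of unit mean (because $h_u\sim\mathcal{CN}(0,1)$), the coverage event $\text{SINR}\geq\theta$ is equivalent to $\|h_u\|^2\geq\frac{\theta}{\rho_0}(\sigma^2+\sum_k I_k)$. Conditioning on the aggregate interference and integrating out the exponential desired fading turns the coverage probability into an expectation of an exponential, $\mathbb P\{\text{SINR}\geq\theta\}=\exp(-\tfrac{\theta}{\rho_0}\sigma^2)\,\E{\exp(-\tfrac{\theta}{\rho_0}\sum_k I_k)}$. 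First I would invoke the mutual independence of the per-cluster point processes $\phi_{sk}$ (each is obtained by independent thinning of the active-SBS PPP, assigning every SBS to the cluster whose files it stores), which makes the interferences $\{I_k\}$ independent and lets the expectation factor into a product of per-cluster Laplace transforms $\mathcal L_{I_k}(s)=\E{e^{-sI_k}}$, to be evaluated at $s=\theta/\rho_0$.

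Next I would compute each $\mathcal L_{I_k}$ through the probability generating functional (PGFL) of the homogeneous PPP $\phi_{sk}$ of density $\lambda_{sk}$. Writing $I_k=\sum_{i\in\phi_{sk}}\rho_{ik}\|h_{ui}\|^2 r_{ui}^{-\alpha}$ and treating each interferer's transmit power as an i.i.d.\ mark $\rho_k$ (its distribution fixed by the channel-inversion law, through the interferer-to-served-user distance), I would average first over the i.i.d.\ interferer fading $\|h_{ui}\|^2\sim\exp(1)$. This produces the per-point factor $\E{e^{-s\rho_k\|h_{ui}\|^2 r^{-\alpha}}}=1/(1+s\rho_k r^{-\alpha})$, and the PGFL then gives $\mathcal L_{I_k}(s)=\exp\big(-\lambda_{sk}\,\E{\int_{\mathbb R^2}\big(1-\tfrac{1}{1+s\rho_k r^{-\alpha}}\big)\,\mathrm d r}\big)$, with the random power $\rho_k$ kept inside the expectation.

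The main technical step — and the part I expect to be the real obstacle — is evaluating this planar integral and then the power average. Passing to polar coordinates and normalizing $r$ by $(s\rho_k)^{1/\alpha}$ reduces the radial integral to $\int_0^\infty\frac{x}{1+x^\alpha}\,\mathrm d x$, which by the Beta/Gamma identity $\int_0^\infty\frac{y^{t-1}}{1+y}\,\mathrm d y=\Gamma(t)\Gamma(1-t)$ with $t=2/\alpha$ equals $\tfrac12\Gamma(1+\tfrac{2}{\alpha})\Gamma(1-\tfrac{2}{\alpha})$; here the hypothesis $\alpha>2$ is exactly what guarantees $0<2/\alpha<1$ so the integral converges. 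Pulling the factor $\pi(s\rho_k)^{2/\alpha}$ out of the radial integral and then taking the remaining expectation over $\rho_k$ yields $\mathcal L_{I_k}(s)=\exp\big(-\pi\lambda_{sk}\,\Gamma(1+\tfrac{2}{\alpha})\Gamma(1-\tfrac{2}{\alpha})\,s^{2/\alpha}\,\E{\rho_k^{2/\alpha}}\big)$. Substituting $s=\theta/\rho_0$ and multiplying the $N_c$ cluster factors together with the noise term $\exp(-\tfrac{\theta}{\rho_0}\sigma^2)$ reproduces expression $(21)$ exactly. The one subtlety to handle carefully is to keep $\rho_k$ inside the expectation throughout, so that the factor that emerges is $\E{\rho_k^{2/\alpha}}$ rather than $(\E{\rho_k})^{2/\alpha}$.
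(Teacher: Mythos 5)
Your proposal is correct and follows essentially the same route as the paper's Appendix B: exponential desired fading turns the coverage probability into $\exp(-\tfrac{\theta}{\rho_0}\sigma^2)$ times a product of per-cluster interference Laplace transforms, each evaluated via the PPP generating functional with the channel-inversion power kept as a random mark, yielding the $\Gamma(1+\tfrac{2}{\alpha})\Gamma(1-\tfrac{2}{\alpha})\,\E{\rho_k^{2/\alpha}}$ factor. Your explicit justification of the convergence condition $\alpha>2$ and of keeping $\rho_k$ inside the expectation merely fills in details the paper leaves implicit.
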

		\begin{proof}
			See Appendix B.
		\end{proof}		
		We can see from Lemma $2$, that increasing the SBS density enables to reduce the used transmit power. Nevertheless, we need to take into consideration the constant power consumed by the infrastructure of active SBSs  which represents an important part of  power consumption of the network. 
		The  average achievable spectral efficiency can be written  as:
		\begin{align}\tag{22}
		\nonumber &  SE = \lambda_s \pi R^2_n \; \text{log} (1+\theta) \; \mathbb P \left\lbrace \text{SINR}  \geq \theta \right\rbrace.\nonumber
		\end{align}
		Given the average achievable spectral efficiency  and average consumed power, we can derive a closed form expression of the energy efficiency $\Sigma$: 
		\begin{align}\tag{23}
		\nonumber & \Sigma= \frac{\lambda_s \pi R^2_n \; \text{log} (1+\theta) \; \mathbb P \left\lbrace \text{SINR} \geq \theta \right\rbrace}{\rho^{c}_{total} }.	\nonumber
		\end{align}
		By substituting $(18)$ and $(21)$ into $(23)$, we obtain the average EE. We can notice from $(21)$ that the density of SBSs is a major defining parameter of $\Sigma$. 
		\subsection{Analysis of  Energy Efficiency}
		We can see, in $(22)$, that  increasing the SBS density   results in a reduction in the interference. This is mainly due to the resulting decrease in transmit power since users are closer to their serving SBSs. Nevertheless, increasing SBS density   results in more power consumption due to the strain of active infrastructure. We aim  at finding the optimal active SBS density vector that maximizes the achievable EE, even when user positions are not taken into consideration. We can imagine a setting in which SBS are activated and shutdown based on user density and popularity profiles. 
		We consider a constraint in which we aim at maintaining a power budget that is lower than that  used when no proactive caching is enabled. The  problem  can be  formulated as follows:
		\begin{align}\tag{24}
		\underset{\Lambda_s}{\text{maximize}} &  \;\; \Sigma \\\nonumber \tag{24a}
		\text{subject to}	&  \;\;\rho^{c}_{total}-\rho^{nc}_{total}\leq 0,\\\nonumber \tag{24b}
		& \;\;\Lambda_s^{\dagger} \mathbf{1} \leq \lambda_{s_{max}}.\nonumber
		\end{align}
		This optimization problem  allows to  derive the optimal  density vector  needed to maximize the average EE for a given user density, popularity profiles and cache size.  Although the closed form expression of $\Sigma$ is complex to analyze, it can be proven  that $\Sigma$ is quasi concave using an intelligent simplification  by considering a composition of $\Sigma$  with an affine mapping \cite{boyd}.  
		\begin{Theorem}
			The considered optimization problem  is quasi concave  and the optimal SBS density $\Lambda^*_s$ can be derived, with zero duality gap. 
			If $\exists \Lambda^*_s$ such that $\nabla \Sigma(\Lambda^*_s)=0$ then this vector is unique and it is the optimal solution. If this condition is not satisfied for any $\Lambda_s$ such that $\Lambda_s^{\dagger} \mathbf{1} \leq \lambda_{s_{max}}$, then the optimal solution can be found using the   Karush-Kuhn-Tucker (KKT) conditions:
			\begin{equation*}\tag{25}
			\begin{aligned}
			& \nabla L(\Lambda^*_s,\varsigma,\kappa)= \nabla \Sigma(\Lambda^*_s)+ \varsigma \nabla C(\Lambda^*_s) +\kappa \nabla H(\Lambda^*_s)=0,\\
			& \varsigma C(\Lambda^*_s)=0, \kappa H(\Lambda^*_s)=0,\\
			& H(\Lambda^*_s) \leq 0, C(\Lambda^*_s) \leq 0,\\
			& \varsigma>0,\kappa>0, \\
			\end{aligned}
			\end{equation*}
			where $L$ refers to the Lagrangian associated with problem $(24)$, $C(\Lambda_s)=\rho^{c}_{total}-\rho^{nc}_{total}$, $H(\Lambda_s)=\Lambda_s^{\dagger} \mathbf{1}$.  $\varsigma\; \text{and}\; \kappa$ refers  to  the  Lagrangian multipliers associated, respectively, with  $(24a)$ and $(24b)$.
		\end{Theorem}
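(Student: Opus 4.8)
The plan is to recast problem (24) as a \emph{concave--convex fractional program} — a nonnegative affine numerator divided by a positive convex denominator over a convex feasible set — so that $\Sigma$ is not merely quasi-concave but pseudoconcave, from which the KKT characterization, the uniqueness of an interior stationary point, and the zero duality gap all follow. The first step is to simplify the numerator. Substituting the channel-inversion law $\rho_{us}=\rho_0 r_{us}^\alpha$ into (21), the vector $\Lambda_s$ enters the coverage probability \emph{only} through the products $\lambda_{sk}\,\E{\rho_k^{2/\alpha}}$ in the exponents. Since $\E{\rho_k^{2/\alpha}}=\rho_0^{2/\alpha}\,\E{r^2}$ with $r$ the distance to the nearest serving SBS of a PPP of intensity $\lambda_{sk}$, one has $\E{r^2}=1/(\pi\lambda_{sk})$, so each such product collapses to the constant $\rho_0^{2/\alpha}/\pi$, independent of $\lambda_{sk}$. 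Hence $\mathbb P\{\mathrm{SINR}\ge\theta\}$ does not depend on $\Lambda_s$, and by (22) the spectral efficiency $SE=\lambda_s\pi R_n^2\log(1+\theta)\,\mathbb P\{\mathrm{SINR}\ge\theta\}$ is a nonnegative \emph{affine} function $\mathbf c^{\dagger}\Lambda_s$ of the decision vector through $\lambda_s=\mathbf 1^{\dagger}\Lambda_s$. This is the ``intelligent simplification'' that renders the subsequent ratio tractable.

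The second step, which I expect to be the main obstacle, is to prove that $\rho^c_{total}$ in (18) is convex and strictly positive on the nonnegative orthant, and that the feasible region is convex. I would split $\rho^c_{total}$ into the infrastructure and backhaul/hard-disk terms, which become affine once the factors $(1-e^{-\lambda_{sk}\pi R^2})$ and the aggregate $\lambda_s$ are isolated, and the transmit-power term built from $\E{\rho_k}=\rho_0\,\gamma(\tfrac\alpha2+1,\pi\lambda_{sk}R^2)/(\lambda_{sk}\pi)^{\alpha/2}$. The delicate point is the aggregate prefactor $\lambda_s=\sum_j\lambda_{sj}$ multiplying per-cluster quantities, because the product of a linear map and a convex map need not be convex; I would handle this by first checking that each scalar map $\lambda_{sk}\mapsto\E{\rho_k}$ is nonnegative and convex on $\lambda_{sk}\ge 0$ (it behaves like $\lambda_{sk}$ near the origin and like $\lambda_{sk}^{-\alpha/2}$ at infinity, using $\alpha>2$ and the monotonicity of the incomplete gamma ratio), that $\lambda_{sk}\mapsto\lambda_{sk}(1-e^{-\lambda_{sk}\pi R^2})$ is convex (as $1-e^{-x}$ is concave), and then verifying directly that the Hessian of the fully assembled sum is positive semidefinite on the feasible set. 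The constraint $H(\Lambda_s)=\mathbf 1^{\dagger}\Lambda_s\le\lambda_{s_{\max}}$ is linear, hence convex; for $C(\Lambda_s)=\rho^c_{total}-\rho^{nc}_{total}\le0$ I would note that $\rho^{nc}_{total}$ depends on $\Lambda_s$ only through $\lambda_s$ and reduce the convexity of $C$ to the same Hessian sign check (this coupling of the scalar reference $\rho^{nc}_{total}$ with the vector $\rho^c_{total}$ is the companion subtlety to the denominator convexity).

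With $SE$ affine and $\rho^c_{total}$ convex and positive, quasi-concavity follows from the superlevel-set criterion: for every $t\ge 0$ the set $\{\Lambda_s:\,SE-t\,\rho^c_{total}\ge 0\}$ is a superlevel set of the concave map $SE-t\,\rho^c_{total}$ (affine minus $t$ times convex) and is therefore convex, while for $t<0$ it is the whole orthant; this is precisely the quasi-concavity-under-affine-composition argument for linear-fractional-type objectives in \cite{boyd}. Such concave-over-convex ratios are moreover pseudoconcave, so over the convex feasible set the KKT system (25) is both necessary and sufficient for global optimality and every feasible stationary point is a global maximizer; when $\nabla\Sigma(\Lambda^*_s)=0$ at an interior point, strict pseudoconcavity forces that maximizer to be unique, which is the asserted dichotomy. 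Finally, zero duality gap follows from the Dinkelbach/parametric reformulation: maximizing $SE/\rho^c_{total}$ is equivalent to finding the root $t^\ast$ of $q(t)=\max_{\Lambda_s}\{SE-t\,\rho^c_{total}\}$, and each subproblem (for $t\ge0$) is a concave maximization over a convex set satisfying Slater's condition (e.g. $\Lambda_s=0$ is strictly feasible for $H$ and $C$), so strong duality holds for every member of the family and transfers to (24).
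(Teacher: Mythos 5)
Your proposal rests on recasting (24) as a concave-over-convex fractional program with an \emph{affine} numerator, but the key simplification that launches it is false. You claim that $\lambda_{sk}\,\E{\rho_k^{2/\alpha}}$ collapses to the constant $\rho_0^{2/\alpha}/\pi$ because $\E{r^2}=1/(\pi\lambda_{sk})$ for the nearest point of a PPP. That identity holds only for the \emph{untruncated} nearest-neighbour distance; in this model the serving SBS must lie within radius $R$, so $\E{\rho_k^{2/\alpha}}=\rho_0^{2/\alpha}\gamma(2,\pi\lambda_{sk}R^2)/(\lambda_{sk}\pi)$ and the exponent in (21) contains $\gamma(2,\pi\lambda_{sk}R^2)=1-e^{-\pi\lambda_{sk}R^2}(1+\pi\lambda_{sk}R^2)$, which genuinely depends on $\lambda_{sk}$. (The paper's own derivative (39) exhibits the residual terms $e^{-(tz_k+\lambda^0_{sk})\pi R^2}$.) Hence $SE$ is not affine in $\Lambda_s$ and the linear-fractional superlevel-set argument, the pseudoconcavity claim, and the Dinkelbach strong-duality transfer all lose their foundation. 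A second, independent gap is the convexity of $\rho^c_{total}$, which you defer to a Hessian check but describe in a way that already contradicts itself: a function of $\lambda_{sk}$ that grows linearly near the origin and decays like $\lambda_{sk}^{-\alpha/2}$ at infinity (which is exactly how $\E{\rho_k}=\rho_0\gamma(\tfrac{\alpha}{2}+1,\pi\lambda_{sk}R^2)/(\lambda_{sk}\pi)^{\alpha/2}$ behaves) cannot be convex on the half-line, and the cross terms $p_{iu}(1-e^{-\lambda_{sj}\pi R^2})(\E{\rho_j}-\E{\rho_k})$ in (18) have no fixed sign. Your Slater point $\Lambda_s=0$ also fails, since there the denominator vanishes and $C(0)=0$ rather than $C(0)<0$.

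The paper sidesteps both difficulties by never asserting concavity of the numerator or convexity of the denominator. It restricts $\Sigma$ to an arbitrary line $f(t)=tZ+\Lambda^0_s$, writes $\Sigma\circ f=UV$ with $U'>0$ and $V'<0$, and shows that the ratio $L(t)=-U'V/(V'U)$ is strictly increasing, so the derivative of $\Sigma\circ f$ changes sign at most once; unimodality along every line gives quasi-concavity, and uniqueness of any stationary point follows from the same monotonicity. The KKT necessity/sufficiency and zero duality gap are then obtained from the Arrow--Enthoven quasi-concave programming theorems rather than from convex strong duality. If you want to salvage your route, you would need to either prove concavity of $SE$ and convexity of $\rho^c_{total}$ directly (which appears false), or adopt the paper's one-dimensional sign-change argument.
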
 
		\begin{proof}
			See Appendix C.
		\end{proof}	
		Finding the  optimal  density vector $\Lambda^*_s$ based on the  KKT conditions in  $(25)$, can be  done using, for example,   the  sub-gradient descent method \cite{boyd}. 
		\section{Exploiting spatial correlation in users demand}
		{		 While, in the previous sections, EE was optimized with respect to the  density vector of active SBSs, further EE gain can be  achieved by including spatial  information whenever it is available.   
			In this paper, the choice to decouple the two problems of cached file selection and content placement can be justified by the fact that popularity distributions change slower than user locations.   Consequently, the selected cached content which depends on the average  popularity distribution per cluster, is kept  constant for long periods and the network can adapt its location based on user movement. Real life examples can also support this approach.  While correlation in content popularity  between users from the same social  group stays for long periods of time, their location can  change due to  mobility. This motivates the need to adapt the cached content placement more often than the selected files in order to simplify the management of the network. Acquiring  information on user location requires a non negligible processing and signaling overhead. Consequently, this information should be leveraged whenever it is available. 
			
			In the case of low mobility, where users do not change positions too often, it makes sense to adapt the files placement based on location information.  	Adapting the cached files placement   can be done  periodically  or whenever the  backhaul load allows it.
			
			Practically, we may observe  a  spatial correlation in user file demand. This can be explained by the fact that people from the same social group (living or working in the same place) are most likely to have similar preferences. We aim at finding an  effective allocation of the  SBSs to the different  clusters in order to  minimize transmit power  and, consequently, to improve the achievable EE.  In fact,  decreasing the distance between a given user and  the SBS storing its requested file results in lower transmit power. Consequently, by effectively allocating the SBSs to the different clusters, we are able to  lower the level  of interference in the network, which results in increasing the EE \cite{EED}. The problem  of cache placement  can be tackled  by adopting a hybrid approach  where, a clustering based on both the location and  content popularity  is performed. This approach  is more complex and do not necessarily produce better results since  content popularity stays constant for long periods of time.  In addition,  the clustering that is done on the users enables also to group the files accordingly. Consequently, the complexity of the resulting optimal file placement  problem is lower since the search space is reduced from the whole file catalog to groups of file of approximately equal total size. This  simplifies the management of the caching system compared to  existing work on location based optimization where, the complexity of the formulated problems is proportional  to  the  number of files.
			
			We consider a setting in which the location of all  users and  SBSs are known. This is implemented by considering a  snapshot of the  users and SBSs PPPs.
			We develop an integer optimization problem where we aim at minimizing the used power over the possible SBS-cluster affectation. Thanks to channel inversion power control, minimizing the used power is equivalent to reducing the distance between the users and the  SBS  caching the files they are most likely to request. 
			We define $\omega_{u,s}$, the weight of the link between  user $u$  and the  SBS $s$ as follows:
			\begin{equation}\tag{26}
			\begin{aligned}
			& \omega_{u,s}   =
			\left\{
			\begin{array}{ll}
			r_{us}^{-\alpha}  & \mbox{if $r_{us} < R$,}  \\
			\omega_{\infty} & \mbox{otherwise, } 
			\end{array}
			\right.\\
			\end{aligned}
			\end{equation}	
			where $r_{us}$ representing the distance between user $u$  and the  SBS $s$. $\omega_{\infty}$ is  an arbitrarily  large value. $\omega_{\infty}$ assures that no user can communicate with a SBS at a distance larger than $R$. 
			We sort the links pathloss coefficients in decreasing order and denote by  $(s)_u$ the SBS with  the $s$-th greatest pathloss coefficient to user $u$. 
			Based on the considered system model, less power is used  when a user is served from a SBS within its  neighborhood. Consequently, maximizing  $\omega_{u,s}$ is equivalent to minimizing the transmit power  and the distance between the user and its serving SBS. The average number of SBS associated with each cluster $k$ is given by $N_{sk}= \lambda_{sk} \pi R^2_n$. Since $\lambda_{sk},\; k=1..N_c$ are computed in $(24)$ so that EE is maximized, it does not take into consideration their spatial repartition  in the network.
			The transmit power increases when users from the  same cluster are  not located within a reduced area. In order to  deal  with  this problem, we  relax the  constraint on the  number  of SBS  per cluster and  we replace $N_{sk}$ by $N'_{sk}$ where $N'_{sk}>N_{sk}$. We  consider the adjacency matrix $Y$,  where $y_{s,k}, \forall s=1..N_s, k=1..N_c$ is given by:
			\begin{equation}\tag{27}
			\begin{aligned}
			& y_{s,k}   =
			\left\{
			\begin{array}{ll}
			1  & \mbox{if  SBS $s $ is associated with  cluster $k$.}  \\
			0 & \mbox{otherwise. } 
			\end{array}
			\right.\\
			\end{aligned}
			\end{equation}
			The problem of optimal SBS allocation to their respective clusters can be formulated  as:
			\begin{align}\tag{28}
			&	\underset{Y}{\text{max}}  \sum_{k=1}^{N_c} \sum_{s=1}^{N_s} \sum_{u=1}^{U}   \big( \sum_{f \in \Delta_k}^{}  p_{fu}\big) \omega_{u,(s)_u}( y_{(s)_u,k}  \prod\limits_{i=1}^{s-1} (1- y_{(i)_u,k}   )) \\\nonumber \tag{28a}
			&	\text{subject to}   \sum_{k=1}^{N_c} y_{s,k} \leq 1 , \forall s= 1..N_s,\\\nonumber \tag{28b}
			& \sum_{s=1}^{N_s} y_{s,k} \leq N'_{sk} , \forall k= 1..N_c.\nonumber
			\end{align}
			Here, $(28a)$ captures the fact that each SBS stores the most popular files of one unique cluster. $(28b)$ indicates that the number of SBSs allocated to each cluster should  respect the density vector $\Lambda^*_s$ which maximizes EE. 
			The objective function in $(28)$ guarantees that  each SBS caches the  files that are most likely to  be requested by nearby users.  In fact, $( y_{(s)_u,k}  \prod\limits_{i=1}^{s-1} (1- y_{(i)_u,k}   ))$ is an indicator function that  refers to the case where the  most popular files of  cluster $k$ are cached in  SBS $(s)_u$ and not in  SBSs $(i)_u, i=1,...,s-1.$ Consequently, the objective function value is equal to  the expected pathloss between users and their serving SBSs. 
			We show that the considered optimization problem is NP-hard. We then prove that it can be formulated as the maximization of a  submodular function over matroid constraints and we provide an  algorithm that enables to  derive a $(1-\frac{1}{e})$ approximation of the optimal  solution of problem $(28)$. 
			This formulation looks somehow similar to  the  considered problem  in \cite{femto} where, the authors aim at optimizing the  allocation of each  individual  file to a set of femto access points in order to  minimize the  expected downloading  time.
			Nevertheless, problem $(28)$ consider a different  objective function where, the aim is to  minimize the  transmit power. While \cite{femto} aims at optimizing the assignment of each individual file to the different femto access points, the objective in $(28)$ is, actually, to assign predefined  batches of files from each cluster to the SBSs. Consequently, the problem formulation in the present paper enables to considerably reduce the complexity of deriving a solution. In fact, the running time  depends on the number of popularity based clusters rather than the number of files. This is an  important  impact of the present formulation in $(28)$ since the number of files is typically very large. The considered setting enables to solve problem  $(28)$ using sophisticated algorithms that can be computationally prohibitive  otherwise.}
		\subsection{Computational Intractability}
		We start  by showing the computational intractability of problem $(28)$. 
		\begin{Theorem} 
			The considered optimization problem in $(28)$ is NP-hard.
		\end{Theorem}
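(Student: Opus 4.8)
The plan is to establish NP-hardness by exhibiting a polynomial-time reduction from the \emph{Maximum Coverage} problem, which is NP-complete and which is (not coincidentally) the canonical hard instance of monotone submodular maximization subject to a cardinality/matroid constraint -- precisely the structure that problem $(28)$ is later shown to possess. Recall an instance of Maximum Coverage: a ground set $\{e_1,\dots,e_m\}$ with nonnegative weights $w_1,\dots,w_m$, a family of subsets $S_1,\dots,S_n$, a budget $B$, and a target $T$; the question is whether some $B$ subsets cover total weight at least $T$. I would map such an instance to a \emph{single-cluster} instance of $(28)$ by setting $N_c=1$, creating one user $u_i$ per element $e_i$ (so $U=m$) and one SBS $s_j$ per subset $S_j$ (so $N_s=n$), and fixing the capacity $N'_{s1}=B$. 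With $N_c=1$, constraint $(28a)$ is vacuous and $(28b)$ reduces to the cardinality constraint ``assign at most $B$ SBSs to the single cluster,'' i.e. ``select at most $B$ subsets.''

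Next I would encode the weights and the incidence structure. The popularity data are chosen so that $\sum_{f\in\Delta_1} p_{f u_i}=w_i$ for each user, making the per-user multiplicative factor in the objective equal to the element weight. The link weights are set to a two-valued gadget: $\omega_{u_i,s_j}=1$ whenever $e_i\in S_j$ and $\omega_{u_i,s_j}=\varepsilon$ whenever $e_i\notin S_j$, for a sufficiently small $\varepsilon>0$, with every entry a finite in-range value so that the degenerate quantity $\omega_\infty$ never enters. The key observation is that, for a fixed assignment $Y$, the indicator $y_{(s)_u,1}\prod_{i=1}^{s-1}(1-y_{(i)_u,1})$ singles out exactly the \emph{highest-weight} assigned SBS for user $u$. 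Hence if at least one covering SBS ($\omega=1$) is assigned, user $u_i$ contributes $w_i$, whereas if only non-covering SBSs ($\omega=\varepsilon$) are assigned it contributes $w_i\varepsilon$. Summing over users, the objective equals $\varepsilon\sum_i w_i+(1-\varepsilon)\sum_{i\text{ covered}} w_i$, so maximizing it is equivalent to maximizing the covered weight.

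With this construction the decision ``is the optimum of $(28)$ at least $\varepsilon\sum_i w_i+(1-\varepsilon)T$?'' holds if and only if some $B$ subsets cover weight at least $T$. Since the reduction uses only polynomially many users, SBSs and weight entries, and all quantities admit a polynomial encoding (the popularities are valid probability masses after a harmless rescaling of the $w_i$ into $[0,1]$, and $\varepsilon$ may be taken inverse-polynomial), a polynomial-time algorithm for $(28)$ would solve Maximum Coverage. This establishes that $(28)$ is NP-hard.

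I would treat $(28)$ in its combinatorial form, taking the coefficients $\omega_{u,s}$, $p_{fu}$, $\Delta_k$ and $N'_{sk}$ as the problem input exactly as written; this sidesteps the separate question of whether the chosen weight matrix is realizable by points in $\mathbb{R}^2$ under the pathloss law. The main obstacle I anticipate is controlling the ``nearest-assigned-SBS'' indicator: because only one SBS per user contributes, and it is the one of largest $\omega$, a careless choice of weights would make the objective encode a weighted-\emph{max} rather than plain coverage. The two-value gadget, together with finite in-range weights that neutralize $\omega_\infty$, is what forces each user's contribution to depend only on \emph{whether} it is covered, and verifying this equivalence in both directions is the crux of the argument.
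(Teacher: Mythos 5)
Your reduction is correct, and it takes a genuinely different route from the paper's. The paper specializes $(28)$ to the case $N_s=N_c$ with equal per-cluster quotas and uniform popularity mass $C$, and then argues that this special case is interchangeable with the Weighted $K$-Set Packing problem, whose NP-hardness is invoked; note that as written the paper builds a set-packing instance \emph{out of} the given instance of $(29)$ and asserts equivalence, which is the reverse of the canonical direction for a hardness reduction. You instead specialize to a \emph{single} cluster, which makes constraint $(28\mathrm{a})$ vacuous and collapses $(28)$ to selecting at most $B$ SBSs to maximize $\sum_u c_u \max_{s\ \text{selected}}\omega_{u,s}$, and you give an explicit forward reduction from Maximum Coverage with a two-valued weight gadget and an exact threshold equivalence in both directions. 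Your version buys a cleaner and more verifiable argument, and a strictly stronger conclusion (the problem is already hard for $N_c=1$, so the hardness stems from the per-user nearest-assigned-SBS selection rather than from the multi-cluster partition structure); the paper's choice keeps the packing constraint among several clusters visibly in play, which is closer to the intended application but leaves the actual mapping from an arbitrary hard instance implicit. Both arguments share the caveat that you flag and the paper does not: the coefficients $\omega_{u,s}$ are treated as free combinatorial inputs rather than as distances realizable by a planar point configuration under the pathloss law, so what is established in either case is hardness of the abstract assignment problem.
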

		\begin{proof}
			In order to show that $(28)$ is NP-hard, we consider a special case of our setting where $N'_{sk}= N \; \forall k= 1..N_c$ and $N_s=N_c$. This special case means that the number of SBS associated with each cluster is the same, which is the case when  $\sum_{u=1}^{U}\sum_{f \in \Delta_k}^{}  p_{fu} = C, \; \forall k= 1..N_c$. 
			In this case, the resulting optimization problem can be written as follow:
			\begin{equation*}\tag{29}
			\begin{aligned}
			\underset{Y}{\text{max}} &  \sum_{k=1}^{N_c} \sum_{u=1}^{U}\sum_{s=1}^{N_s} C\omega_{u,(s)_u}( y_{(s)_u,k}  \prod\limits_{i=1}^{s-1} (1- y_{(i)_u,k}   )) \\
			\text{subject to} & \sum_{k=1}^{N_c} y_{s,k} \leq 1 , \forall s= 1..N_s,\\
			&  \sum_{s=1}^{N_s} y_{s,k} \leq N , \forall k= 1..N_c.\\
			\end{aligned}
			\end{equation*}
			In order to show NP-hardness, we  use a reduction from the following NP-hard problem:\\ 
			\emph{Weighted $K$-Set Packing Problem}:
			$K$-Set packing  is an NP-hard combinatorial problem. It is one of the 21 problems of Karp \cite{packing}. The $K$-Set packing problem  aims to  find a maximum number of  pairwise disjoint sets, with at  most $K$ elements, in a family $S$ of subsets of a universal set $V$.
			The  weighted version of the $K$-Set packing problem is obtained by assigning  a real weight to each subset and maximizing the total weight.\\
			We consider a collection of SBS sets $\{v_i,i=1,...,n\}$, associated each with a weight $\omega_{v_i}= \sum_{u=1}^{U} \underset{s \in v_i}{\text{max}} \;\omega_{u,s} $. Problem $(28)$ can then be  formulated as a Weighted K-Set Packing Problem:
			\begin{equation*}\tag{30}
			\begin{aligned}
			\underset{X}{\text{maximize}} & \sum_{i}^{}  C \omega_{v_i} x_i\\
			\text{subject to} & \; v_i\cap v_j= \emptyset, \forall i,j,\\
			& \left| v_i\right|  \leq N , \forall k= 1..N_c,\\
			& x_i \in \{ 0 , 1 \}.\\
			\end{aligned}
			\end{equation*}
			Solving $(30)$  results in at most $N_c $  sets of SBSs.  Since the resulting sets are disjoint, each of them will be  associated with a given cluster. The number of resulting sets could not exceed $N_c$ since $N_s=N_c$.
			We  can see that solving the weighted $K$-Set Packing Problem, for $K=N$ and where the weight of each subset is  given by $C \omega_{v_i}$, is equivalent to  solving the special case of the SBS allocation problem in $(29)$. Knowing  that the Weighted $K$-Set Packing Problem is NP-hard, we can then conclude that  $(28)$ is also NP-hard.
		\end{proof}
		\subsection{Optimizing small base station allocation}
		{	In order to solve the considered optimization problem in $(28)$, we start by showing that it is equivalent to the maximization of a sub-modular set function over matroid constraints. The  definitions of matroids and sub-modular set functions can be found in \cite{comb}.
			This  structure allows the use the randomized algorithm proposed in \cite{subopt} which achieves, at least,  $(1-\frac{1}{e})$ of the optimal value. Taking into consideration the  problem constraints we have the  following:	
			\begin{lem}
				The Considered Optimization problem in $(28)$ is equivalent to a maximization of a sub-modular set  function over matroid constraints.
			\end{lem}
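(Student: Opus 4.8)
The plan is to exhibit an explicit ground set on which the objective of $(28)$ becomes a monotone submodular set function, and to recognize the two families of constraints as partition matroids. First I would take the ground set to be the set of all SBS--cluster pairs, $E=\{(s,k):s=1,\dots,N_s,\ k=1,\dots,N_c\}$, and identify every binary allocation matrix $Y$ with the subset $S\subseteq E$ defined by $(s,k)\in S\iff y_{s,k}=1$. Writing $w_{uk}=\sum_{f\in\Delta_k}p_{fu}\ge 0$ for the aggregate demand of user $u$ for the files of cluster $k$, the objective of $(28)$ becomes the set function
\[
f(S)=\sum_{k=1}^{N_c}\sum_{u=1}^{U}\sum_{s=1}^{N_s} w_{uk}\,\omega_{u,(s)_u}\,\Big(y_{(s)_u,k}\textstyle\prod_{i=1}^{s-1}(1-y_{(i)_u,k})\Big).
\]

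The key step is to simplify the inner indicator expression. For a fixed user $u$ and cluster $k$, recall that $(s)_u$ ranks the SBSs by decreasing weight $\omega_{u,\cdot}$. The factor $y_{(s)_u,k}\prod_{i<s}(1-y_{(i)_u,k})$ equals $1$ exactly when $(s)_u$ is allocated to cluster $k$ while no higher-ranked (i.e.\ larger-weight) SBS is; hence at most one rank $s$ contributes, namely the largest-weight SBS allocated to cluster $k$. Consequently the inner sum telescopes to
\[
\sum_{s} \omega_{u,(s)_u}\,y_{(s)_u,k}\textstyle\prod_{i<s}(1-y_{(i)_u,k}) \;=\; \max\{\,\omega_{u,s}: (s,k)\in S\,\},
\]
with the convention that the maximum over the empty set is $0$. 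This recasts the objective as the nonnegative weighted sum $f(S)=\sum_{u,k} w_{uk}\,g_{u,k}(S)$, where $g_{u,k}(S)=\max\{\omega_{u,s}:(s,k)\in S\}$.

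Next I would establish that each $g_{u,k}$ is monotone and submodular, whence monotone submodularity of $f$ follows since it is a conical (nonnegative) combination of such functions. Monotonicity is immediate, because enlarging $S$ can only enlarge the set over which the maximum is taken. For submodularity I would verify the diminishing-returns inequality directly: for $A\subseteq B\subseteq E$ and $e=(s,k)\notin B$, the marginal gain of adding $e$ is $\max\{\omega_{u,s}-g_{u,k}(A),0\}$ with respect to $A$ and $\max\{\omega_{u,s}-g_{u,k}(B),0\}$ with respect to $B$; since $g_{u,k}(A)\le g_{u,k}(B)$, the former dominates, which is exactly the submodularity condition. Elements outside the block $\{(\cdot,k)\}$ leave $g_{u,k}$ unchanged and contribute zero marginals, so they do not affect the inequality.

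Finally I would identify the constraints as matroid constraints. Constraint $(28a)$, $\sum_k y_{s,k}\le 1$, is the partition matroid $\mathcal{M}_1$ obtained by partitioning $E$ into the blocks $B_s=\{(s,k):k=1,\dots,N_c\}$ with unit capacities; its independent sets are precisely the $S$ selecting at most one cluster per SBS. Constraint $(28b)$, $\sum_s y_{s,k}\le N'_{sk}$, is the partition matroid $\mathcal{M}_2$ obtained by partitioning $E$ into the blocks $C_k=\{(s,k):s=1,\dots,N_s\}$ with capacities $N'_{sk}$. The feasible region of $(28)$ is exactly the intersection of the independent-set families of $\mathcal{M}_1$ and $\mathcal{M}_2$, so $(28)$ is the maximization of a monotone submodular function over matroid constraints, as claimed, which is the structure required by the algorithm of \cite{subopt}. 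I expect the only delicate step to be the telescoping identity for the inner sum, that is, verifying that the product of indicators isolates the single largest-weight in-cluster SBS and hence yields the maximum; the matroid identifications and the submodularity of a weighted maximum are then routine.
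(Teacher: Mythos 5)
Your proof is correct, and it takes a genuinely different --- and cleaner --- route than the paper's. The paper's Appendix D argues submodularity directly on allocation sets: for $X\subseteq Y$ it compares the marginal gain of adding an SBS $l$ to cluster $i$ via the set $\Pi_i(\cdot)$ of users who switch their serving SBS, uses the inclusion $\Pi_i(Y\cup\{y_{li}\})\subseteq \Pi_i(X\cup\{y_{li}\})$, and never rewrites the objective; its final inequality is essentially asserted (the observation that the per-user weight increments are positive does not by itself yield the diminishing-returns comparison --- one also needs that each user's gain $\max\{\omega_{u,l}-\omega^{(A)}_{u\mu(u,i)},0\}$ is pointwise smaller for the larger set, which is precisely what your telescoping identity makes explicit). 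Your reduction of the inner sum to $\max\{\omega_{u,s}:(s,k)\in S\}$ recasts the objective as a nonnegative combination of weighted-max (facility-location) functions, for which monotonicity and submodularity are standard, so your argument is both more transparent and repairs the loose step in the paper's marginal computation. On the constraints you are also more complete: the paper only remarks that $(28b)$ is a matroid constraint (citing \cite{femto}), whereas you identify both $(28a)$ and $(28b)$ as partition matroids and note that the feasible region is their intersection. One caveat applies equally to you and to the paper: the intersection of two matroids is not itself a matroid, and the $(1-\frac{1}{e})$ guarantee of the continuous-greedy/pipage algorithm in \cite{subopt} is stated for a single matroid constraint; your argument does establish the lemma as phrased (``over matroid constraints''), but the downstream approximation claim would require a matroid-intersection result rather than \cite{subopt} as cited.
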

			\begin{proof}
				See Appendix D.
			\end{proof}	
			In order to solve the considered problem, we use  the randomized algorithm proposed in \cite{subopt}. This algorithm provides a $(1-\frac{1}{e})$-approximation of the optimal  solution for sub-modular set function maximization with matroid constraints. This algorithm   consists  of  two steps. In the first one,  a fractional solution of the  relaxed problem is obtained using  a continuous greedy process.
			In the second part of the algorithm, the derived fractional solution is rounded using a variant of the pipage rounding technique \cite{pip}.
			In a typical setting, randomly rounding a fractional solution of an optimization problem does not preserve the feasibility of the solution, in particular when equality constraints are considered.  Nevertheless, the pipage rounding technique in \cite{pip} enables to round a fractional solution  so that the problem constraints are not violated.
			In our case the running  time of the algorithm  is  $O((N_s N_c)^8)$ \cite{subopt}, where $N_s= \lambda_s \pi R^2_n$. This is quite convenient since the running time of the algorithm does not depend on the number of files in the catalog which can be very large.  This an interesting result of content based clustering since it reduces the search space from the whole catalog to bins  of files of approximately equal size. }
		\section{Numerical Results AND Discussion}
		{In this section, we  investigate the impact of the different system parameters on the Cache hit probability and  EE. We then investigate the impact of the SBS allocation algorithm on the performances of the network. We consider a circular region with  an area of  $A = 10 Km^2$.
			We simulate two PPP processes, one for the users and another for the SBSs  over this  area. The respective  densities of these process are $\lambda$ and $\lambda_{s_{max}}$ with $\lambda>>\lambda_{s_{max}}$ .
			The considered SBS density values  are defined based on the typical communication range of a SBS.
			We consider a catalog  constituted of $F=2000$ files with different randomly generated sizes $L_i, \; i=1...F$ in the range $\left[10 \;\text{MB}...100\; \text{MB}\right]$ \cite{EED}. We also consider the normalized cache size $\eta=\frac{M}{\sum_{i=1}^{F}L_i}$.
			We characterize each cluster by a given popularity based  file ordering.  For each user $u$, $P_u$   is generated  according to a Zipf  distribution  with  parameter $1$ \cite{MATHA}, after randomly selecting a  cluster file ordering.  
			This  results in a random allocation of the users to the different clusters.
			In order to run the clustering algorithm, we only need an interval over which the search of the number of clusters is carried on.  In our simulations we take $\left[{N_c}_{min}...{N_c}_{max} \right] = \left[5...30 \right]$. We consider a pathloss exponent  $\alpha=2.5$.
			We consider the following power values \cite{EED}:}
		\begin{center}
			\begin{tabular}{|c|l|c|l|}
				\hline
				$\rho_o (dBm)$& $21$ & $ \rho_{bh}(W)$ & $10 W$\\
				\hline
				$\rho (W)$& $10.16 $ & $\rho_{hd}(W)$ & $12.5 \times 10^-5 $\\
				\hline
			\end{tabular}
			\label{tab2}
		\end{center}

		\begin{figure}[!htb]
			\centering	
			\includegraphics[scale=0.8]{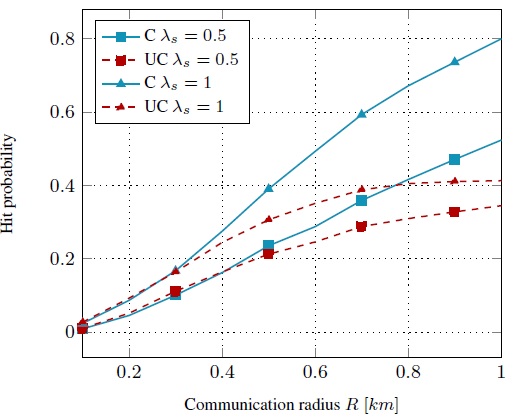}
			\captionsetup{font=footnotesize}
			\caption{Hit probability versus communication radius with different SBS densities (C= content based clustering, UC= Unclustered approach), Normalized Cache Size $\eta=0.25$}
		\end{figure}

		Figure 2 shows the evolution of the cache hit probability as a function of the communication radius $R$  for different SBS densities.  Figure 2 also shows a comparison between  the  content popularity  based clustering approach  and the classical method of supposing the same content popularity  among all users.
		As an  example, for a communication radius of $0.9\; Km$ and a SBS density of  $\lambda_s=1$, we notice a substantial  gain with  a cache hit probability of $0.736$ for the clustering scheme compared  with a probability equal to $0.41$ when caching the most popular files in all SBSs.
		We notice that the hit probability for the scheme without clustering saturates at a low value. This is due to the fact that the same set of files is cached in all the SBSs, which is clearly a suboptimal approach, especially when users are covered by multiple SBSs. The increase  in hit probability for the clustering method is  mainly due to the diversity of files cached in the SBS. Increasing the SBS  density results in reducing the  average distance from mobile users, which, consequently, results in improving the cache  hit probability.

		\begin{figure}[!htb]
			\centering	
			\includegraphics[scale=0.8]{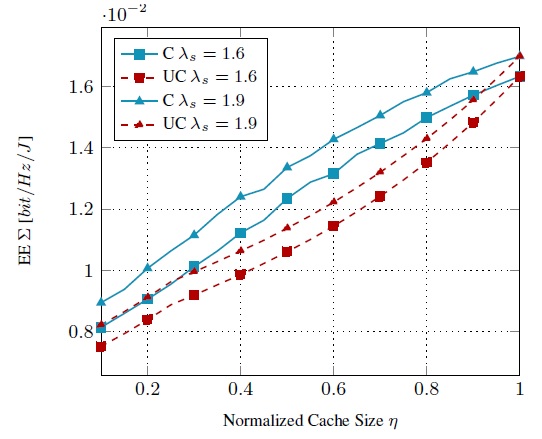}
			\captionsetup{font=footnotesize}
			\caption{ EE   vs normalized cache size with different SBS densities (C= content  based clustering, UC= Unclustered approach)}
		\end{figure}

		{	Figure 3 shows the evolution of the achievable EE as a function of the normalized cache size for different SBS densities.  Figure 3 also shows a comparison between  the  content popularity  based clustering approach  and the classical method of supposing the same content popularity  among all users.
			We can see that the proposed clustering method outperforms the classical  approach of caching the same most popular files in all SBSs.
			For a   normalized cache size of $0.4$ and an SBS density of $1.6\; \text{SBS}/Km^2$, we notice an increase of  $12.5\%$ in the achievable EE. This  gain is mainly due to the  fact  that  the proposed method scores a higher hit probability  than the unclustered approach.  Consequently, the average  energy needed to  fetch  the requested content  is lower when  user clustering  is  used.  Even though  restricting users to communicate with  the closest SBS  from  their cluster, in  the case of a cache miss event, can lead to  an increase in the average transmit power, the observed gain in the energy used to fetch the desired content compensates for that.  
			The gain  in EE increases as a function of the  SBS density. For a   normalized cache size of $0.4$ and an SBS density of $1.9\;\text{SBS}/Km^2$, we notice an increase of  $14.2\%$ in the achievable EE. This increase in EE  gain can be explained by the fact  that  the average transmit power is  a decreasing function of  SBS  density. }

		\begin{figure}[!htb]
			\centering	
			\includegraphics[scale=0.8]{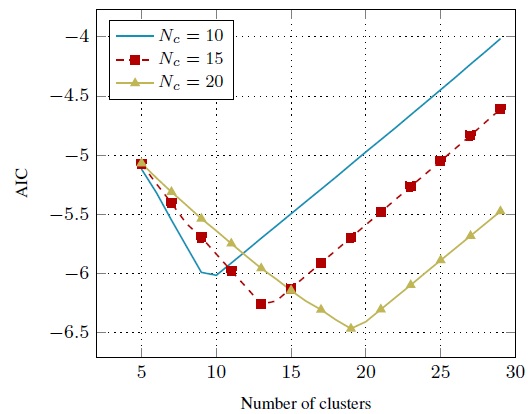}
			\captionsetup{font=footnotesize}
			\caption{Akaike information criterion}
		\end{figure}

		{	Figure 4 shows the performance of  AIC  model selection. We consider three settings in which, the true numbers of clusters are $10$, $15$ and $20$,  respectively. The  figure represents the computed AIC per point over the estimation range.  The lowest AIC value represents the model that  strikes the best trade-off between  fitness and complexity. Note that the negative values of the AIC are due to a negative bias which characterize the AIC with  a small sample number.}

		\begin{figure}[!htb]
			\centering	
			\includegraphics[scale=0.8]{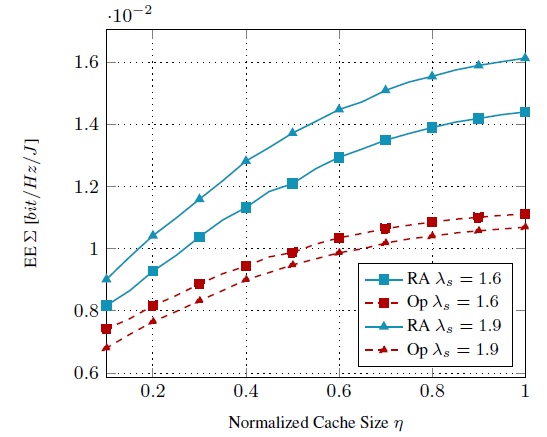}
			\captionsetup{font=footnotesize}
			\caption{EE vs normalized cache size $\eta$ with different SBS densities (RA= random SBS allocation, Op= optimized SBS allocation)}
		\end{figure}

		Figure 5 shows the impact of the SBS allocation algorithm on EE. We can see that, for different values of the  SBS density, optimizing the SBS allocation results in a considerable gain in the  EE. For a SBS density of $\lambda_s=1.9$ and a normalized cache size of $0.4$, optimizing the allocation of the SBSs results in an EE gain of $42.2\%$. As the  SBSs density increases, the  allocation algorithm  results in greater improvement in EE. Optimizing the cluster-SBS association  results in less average transmit power which reduces the interference and improves the achievable EE. 
		\section{Conclusion }
		In this paper, we studied a cache enabled small cell network. A content popularity based clustering approach was considered in order to  exploit the correlation between the request patterns of users.
		We develop an algorithm that enables to estimate the number of content popularity based clusters and to  efficiently assign users  to their respective groups.
		By considering the distribution of SBSs to be a Poisson point process, we investigated  the impact of proactive caching with user clustering on the achievable  EE .
		We proposed an optimization framework that enables to derive the  optimal active SBS  density vector  in order to  maximize the   EE. In order to exploit any  spatial correlation in user request patterns, we proposed a SBS allocation algorithm that aims at minimizing the transmit power by  bringing the  cached files closer to  the users that are  most likely to request them.
		Finally, we performed numerical analysis which show that the proposed clustering framework considerably
		outperforms the scheme in which the most popular files are cached in all SBSs. It also  shows that optimized SBS allocation results in an improvement in the achievable hit probability and EE.
		For  future  work,  the optimal cache placement strategy  with heterogeneous popularity profiles and mobility patterns will be investigated.
		\section{Appendix}
		\textbf{Appendix A proof of Lemma 1:}\\      
		{	We derive the expression of the average consumed power in the network with cache enabled SBSs. The average total power $\rho^c_{total}$ is given by:
			$$\rho^c_{total}= \mathbb E \left\lbrace \rho_I\right\rbrace + 
			\mathbb E \left\lbrace \rho_T\right\rbrace +\mathbb E \left\lbrace \rho_f\right\rbrace, $$
			where $\mathbb E \left\lbrace \rho_I\right\rbrace = \rho \lambda_s \pi R^2_n$ and       $\mathbb E \left\lbrace \rho_f\right\rbrace =\lambda_s \pi R^2_n\big(  \rho_{hd}  \mathbb P \left\lbrace hit\right\rbrace +  \rho_{bh} \big( 1-\mathbb P \left\lbrace hit\right\rbrace\big) \big).   $ 	 
			Taking into account the considered system model, the average transmit power used by a given user from  cluster $k$, $\mathbb E \left\lbrace \rho^{[k]}_T\right\rbrace$ can be written  as follows:
			\begin{align}\tag{31}
			\nonumber & \mathbb E \left\lbrace \rho^{[k]}_T\right\rbrace =     \mathbb E \left\lbrace \rho_k\right\rbrace (1- \sum_{j \neq k}^{} \sum_{i \in \Delta_s}^{} p_{iu}  ( 1- e^{- \lambda_{sj}\pi R^2}))\\
			\nonumber &	+ \sum_{j \neq k}^{} \sum_{i \in \Delta_s}^{} p_{iu}  ( 1- e^{- \lambda_{sj}\pi R^2})\mathbb E \left\lbrace \rho_j\right\rbrace . \nonumber
			\end{align}						
			After averaging over all users in the  network, the average consumed transmit power  is given by:
			\begin{align}\tag{32}
			\nonumber & \mathbb E \left\lbrace \rho_T\right\rbrace = \frac{\lambda_s \pi R^2_n}{U}  \sum_{k=1}^{N_c}  \sum_{u \in \Upsilon_k}^{}  \big(    \mathbb E \left\lbrace \rho_k\right\rbrace\\
			\nonumber & + \sum_{j \neq k}^{} \sum_{i \in \Delta_s}^{} p_{iu}  ( 1- e^{- \lambda_{sj}\pi R^2})(\mathbb E \left\lbrace \rho_j\right\rbrace  -\mathbb E \left\lbrace \rho_k\right\rbrace)\big). \nonumber
			\end{align}
			We need then to compute the average power used  by the users to communicate with the nearest SBS from any given cluster $k\;, k=1..N_c$.\\ 
			According to the PPP assumption for the location of the SBSs, the distance from a user to its  nearest SBS from cluster $k$, denoted by $r_k$, has the following \textit{pdf} \cite{power}:
			\begin{align}\tag{33}
			\nonumber&f_{r_k}(r)=2\pi \lambda_{sk} r e^{-\lambda_{sk} \pi r^2}. \nonumber
			\end{align}
			The transmit power used by the user in this case is given by $\rho_k=\rho_0 r^\alpha_k$. Then:
			\begin{align}\tag{34}
			\E{\rho_k} \nonumber &  =  \int_{0}^{R}   \;2\pi \lambda_{sk} r^{\alpha+1}  exp(-\lambda_{sk} \pi r^2)  \text{d} r\\
			\nonumber &  = \frac{\rho_0 \gamma(\frac{\alpha}{2}+1,\pi \lambda_{sk} R^2)}{(\lambda_{sk} \pi)^{\frac{\alpha}{2}} }.\nonumber
			\end{align}
			Following the same calculus for $\E{\rho_k}, \; k=1..N_c$, we obtain  the final expression of the average consumed power in the network:
			\begin{align}\tag{35}
			\nonumber & \rho^c_{total}=\lambda_s \pi R^2_n(  \rho_{hd}  \mathbb P \left\lbrace hit\right\rbrace +  \rho_{bh} \big( 1-\mathbb P \left\lbrace hit\right\rbrace\big)  +\rho)\\
			\nonumber &	+ \frac{\lambda_s \pi R^2_n}{U}  \sum_{k=1}^{N_c}  \sum_{u \in \Upsilon_k}^{}( \frac{\rho_0 \gamma(\frac{\alpha}{2}+1,\pi \lambda_{sk} R^2)}{(\lambda_{sk}\pi)^{\frac{\alpha}{2}} }\\
			\nonumber &   + \sum_{j \neq k}^{} \sum_{i \in \Delta_s}^{} p_{iu}  ( 1- e^{-  \lambda_{sj}\pi R^2})(\frac{\rho_0 \gamma(\frac{\alpha}{2}+1,\pi \lambda_{sj} R^2)}{(\lambda_{sj}\pi)^{\frac{\alpha}{2}} }  \\
			\nonumber & 	-\frac{\rho_0 \gamma(\frac{\alpha}{2}+1,\pi \lambda_{sk} R^2)}{(\lambda_{sk}\pi)^{\frac{\alpha}{2}} })). 
			\end{align}
			\textbf{Appendix B proof of Lemma 2:}\\
			We derive the achievable coverage probability when using channel inversion power control:
			\begin{align}
			\mathbb P\left\lbrace  \text{SINR} \geq \theta \right\rbrace \nonumber & =\E{  \mathbb P ( \left\|  h_u\right\| ^2  \geq (\frac{\sigma^2 + \sum_{k=1}^{N_c} I_{k}}{\rho_0}) \theta ) \arrowvert I_{k} \forall k}\\
			\nonumber & 	= \E{  exp (- \frac{\theta}{\rho_0} (\sigma^2 +\sum_{k=1}^{N_c} I_{k})) \arrowvert I_{k} \forall k }\\\tag{36}
			\nonumber & 	=   exp (- \frac{\theta}{\rho_0} \sigma^2) \prod_{k=1}^{N_c} \mathcal{L}_{I_{k}}(\frac{\theta}{\rho_0}).\nonumber
			\end{align}
			We use the fact that $\left\|  h_u\right\| ^2$ is exponentially distributed and $\mathcal{L}_{I_{k}}(s)$ is the Laplace transform of $I_{k}$ at $s$.
			To prove  Lemma 3,  we need to compute the Laplace transform of $I_{k}, \forall k $. 
			The interfering base stations constitute multiple PPP processes $\phi_{sk}, k=1...N_c$, each associated with a given cluster.
			The Laplace transform  of $I_{k}$  for a given $k$ is obtained as:
			\begin{align}
			\mathcal{L}_{I_{k}}(\frac{\theta}{\rho_0}) \nonumber &  = \E{exp \big(- \sum_{i \in \phi_{sk}}^{}  	\rho_ {ik}  \left\| h_{ui}\right\| ^2 r^{-\alpha}_{ui}      \big) } \\
			\nonumber & =exp\big(-2\pi \lambda_{sk}  \int_{0}^{\infty }\big( 1- \E{e^{-\theta \left\| h\right\| ^2 \rho_{k} r^{-\alpha}}} \big) r \; \text{d}\; r \big) \\\nonumber\tag{37}
			\nonumber &  = exp\big(-\pi \lambda_{sk}  (\frac{\theta}{\rho_0})^{\frac{2}{\alpha}} \E{\rho^{\frac{2}{\alpha}}_k}  \Gamma(1+\frac{2}{\alpha}) \Gamma(1-\frac{2}{\alpha}) \big).\nonumber
			\end{align}
			$ \E{\rho^{\frac{2}{\alpha}}_k}$  depends on the density of the small cells caching files from cluster $k$. $ \E{\rho^{\frac{2}{\alpha}}_k}$ can be deduced from \textbf{Appendix A} as; $\E{\rho^{\frac{2}{\alpha}}_k} = \frac{\rho_0^{\frac{2}{\alpha}} \gamma(2,\pi \lambda_{sk} R^2)    }{\lambda_{sk} \pi }$.
			Based on Slivnyak's {Theorem} for Poisson Point Processes \cite{power}, the obtained Expression is valid for any user within the network.\\
			\textbf{Appendix C proof of {Theorem }1:}\\
			We  start by showing that the objective function is quasi-concave. Given the expression of $\Sigma$  as a function of the density vector $\Lambda_s$, it is difficult to prove its quasi-concavity by using its gradient or Hessian matrix.
			However, using the fact that   a composition with an affine function preserves quasi-concavity \cite{boyd}, this proof can be  considerably simplified.  
			To prove the  quasi-concavity of  $\Sigma$ as a function of the density vector $\Lambda_s$, we consider an  affine function $f(t)$ given by:
			\begin{align}\tag{38}
			\nonumber &  f(t)=t Z +\Lambda^0_s,
			\end{align}
			where $\Lambda^0_s \in \mathbb{R}^{N_c \times 1}$ such that $\sum_{k=1}^{N_c} \lambda^0_{sk} \leq \lambda_{s_{max}}$, $Z \in \mathbb{R}^{N_c \times 1}$ and $t \in \mathbb{R}$.
			Since a composition with an affine function preserves quasi-concavity, it is sufficient to prove the  quasi-concavity of   $\Sigma(t Z +\Lambda^0_s)$ with respect to $t$ in order to show the quasi-concavity of   $\Sigma$  with respect to  $\Lambda_s$.\\  							
			The  objective  $\Sigma(t Z +\Lambda^0_s)$ can be written as  a product of two nonnegative  functions:\\
			$U(t Z +\Lambda^0_s)=\sum_{k=1}^{N_c}(t  z_k +\lambda^0_{sk}) \pi R^2_n \; \text{log} (1+\theta) \; \mathbb P \left\lbrace \text{SINR}  \geq \theta \right\rbrace$ and $V(t Z +\Lambda^0_s)= \frac{1}{\rho^c_{total} }$.\\
			We start  by computing the derivatives of   $U(t Z +\Lambda^0_s)$ and  $V(t Z +\Lambda^0_s)$ with respect to $t$:
			\begin{align}\tag{39}
			\nonumber &  U'=\frac{\rm d U(t Z +\Lambda^0_s)}{\rm d t}= \\
			\nonumber & (\sum_{k=1}^{N_c} z_k )  \pi R^2_n \; \text{log} (1+\theta) exp (- \frac{\theta}{\rho_0} \sigma^2) \prod_{k=1}^{N_c} \mathcal{L}_{I_{ku}}(\frac{\theta}{\rho_0})\\
			\nonumber &  \times \big(1- \sum_{k=1}^{N_c} \Gamma(1+\frac{2}{\alpha}) \Gamma(1-\frac{2}{\alpha}) \theta^{\frac{2}{\alpha}} \pi z_{k} R^2 e^{-  (tz_{k}+\lambda^0_{sk})\pi R^2}\big).\nonumber
			\end{align}
			Then $\frac{\rm d U(t Z +\Lambda^0_s)}{\rm d t} >0$. We do the same to $V(t Z +\Lambda^0_s)= \frac{1}{\rho^c_{total} }$. We have:
			\begin{align}\tag{40}
			V'\nonumber & =\frac{\rm d V(t Z +\Lambda^0_s)}{\rm d t}  = \frac{-\chi}{P^{c^2}_{total}},\nonumber
			\end{align}
			where $\chi$ is given by:
			\begin{align}\tag{41}
			\chi \nonumber &  =  (\sum_{k=1}^{N_c} z_k ) \pi R^2_n ((\rho_{hd}-\rho_{bh}) \mathbb P \left\lbrace hit\right\rbrace+ \rho+\rho_{bh})\\
			\nonumber & 	+\frac{\pi R^2_n}{U} ((\sum_{k=1}^{N_c} z_k )\chi_1 + (\sum_{k=1}^{N_c} tz_k +\lambda^0_{sk})  \chi_2) + \frac{\pi R^2_n}{U} \\
			\nonumber &  (\sum_{k=1}^{N_c} tz_k +\lambda^0_{sk})   \sum_{u=1}^{U} \sum_{k=1}^{N_c}  \big( \sum_{i \in \Delta_k}^{} p_{iu}\big)  z_{k} \pi R^2   e^{- (tz_{k}+\lambda^0_{sk}) \pi R^2}.\nonumber      
			\end{align}						
			Here $\chi_1$ and $\chi_2$ are respectively given by:
			\begin{align}\tag{42}
			\nonumber & \chi_1=  \sum_{k=1}^{N_c}  \sum_{u \in \Upsilon_k}^{}( \frac{\rho_0 \gamma(\frac{\alpha}{2}+1,\pi ( tz_k +\lambda^0_{sk}) R^2)}{(( tz_k +\lambda^0_{sk})\pi)^{\frac{\alpha}{2}} }\\
			\nonumber &  + \sum_{j \neq k}^{} \sum_{i \in \Delta_s}^{} p_{iu} \rho_0 ( 1- e^{- ( tz_j +\lambda^0_{sj})\pi R^2})\\
			\nonumber  & (\frac{ \gamma(\frac{\alpha}{2}+1,\pi ( tz_k +\lambda^0_{sj}) R^2)}{(( tz_j +\lambda^0_{sj})\pi)^{\frac{\alpha}{2}} }  -\frac{ \gamma(\frac{\alpha}{2}+1,\pi ( tz_k +\lambda^0_{sk}) R^2)}{(( tz_k +\lambda^0_{sk})\pi)^{\frac{\alpha}{2}} })),
			\end{align}	
			\begin{align}
			\nonumber &  \chi_2= \rho_0 \sum_{k=1}^{N_c}  \sum_{u \in \Upsilon_k}^{}(  (z_k \pi R^\alpha e^{- ( tz_k +\lambda^0_{sk})\pi R^2}\\
			\nonumber & 	-\frac{  \gamma(\frac{\alpha}{2}+1,\pi ( tz_k +\lambda^0_{sk}) R^2) \frac{\alpha z_k \pi}{2} (( tz_k +\lambda^0_{sk}) \pi)^{\frac{\alpha}{2}-1}    }{(( tz_k +\lambda^0_{sk})\pi)^{{\alpha}} }) \\
			\nonumber &	 + \sum_{j \neq k}^{} \sum_{i \in \Delta_s}^{} p_{iu}  ( 1- e^{- ( tz_j +\lambda^0_{sj})\pi R^2})	\\
			\nonumber & ( \frac{  \gamma(\frac{\alpha}{2}+1,\pi ( tz_k +\lambda^0_{sk}) R^2) \frac{\alpha z_k \pi}{2} (( tz_k +\lambda^0_{sk}) \pi)^{\frac{\alpha}{2}-1}    }{(( tz_k +\lambda^0_{sk})\pi)^{{\alpha}} }\\
			\nonumber &   -  z_k R^\alpha e^{- ( tz_k +\lambda^0_{sk})\pi R^2}     +\pi( z_j R^\alpha e^{- ( tz_j +\lambda^0_{sj})\pi R^2} \\
			\nonumber &	-\frac{  \gamma(\frac{\alpha}{2}+1,\pi ( tz_j +\lambda^0_{sj}) R^2) {\alpha z_j } (( tz_j +\lambda^0_{sj}) \pi)^{\frac{\alpha}{2}-1}    }{2(( tz_j +\lambda^0_{sj})\pi)^{{\alpha}} })) \\				
			\nonumber & 	+\sum_{j \neq k}^{} \sum_{i \in \Delta_s}^{} p_{iu} z_j\pi R^2 e^{- ( tz_j +\lambda^0_{sj})\pi R^2}\\
			\nonumber &  (\frac{\gamma(\frac{\alpha}{2}+1,\pi ( tz_j +\lambda^0_{sj}) R^2)}{(( tz_j +\lambda^0_{sj})\pi)^{\frac{\alpha}{2}} }  -\frac{ \gamma(\frac{\alpha}{2}+1,\pi ( tz_k +\lambda^0_{sk}) R^2)}{(( tz_k +\lambda^0_{sk})\pi)^{\frac{\alpha}{2}} })). 
			\end{align}	
			Consequently $\frac{\rm d V(t Z +\Lambda^0_s)}{\rm d t} <0$. 
			In what follows we  distinguish two cases depending on the existence of a point $t^*$  such that $  \frac{\rm d \Sigma(t Z +\Lambda^0_s)}{\rm d t}_{t=t^*}  = 0$.
			If  $ \exists  t^*$ such that $  \frac{\rm d \Sigma(t Z +\Lambda^0_s)}{\rm d t}_{t=t^*}  = 0$ then:
			\begin{align}\tag{43}
			\nonumber &  \frac{\rm d \Sigma(t Z +\Lambda^0_s)}{\rm d t}_{t=t^*}  = 0  \Leftrightarrow U'V+V'U=0 \Leftrightarrow \frac{-U'V}{V'U}=1.\nonumber 
			\end{align}
			We compute the derivative of $L(t)=\frac{-U'V}{V'U}$ with respect to $t$. The expression of the derivative is omitted here for brevity. We find that  $L'(t)>0$. Since $L(t)$ is a strictly increasing function then, according the {Theorem} of intermediate value, if $\exists  t^* $ such that  $L(t^*)=1$  then this point is unique. 
			Finally, depending on the existence  of $t^*$, we  have two cases:
			\begin{itemize}
				\item	If $\exists \; t^* $ such that  $L(t^*)=1$  then this point is unique and $\Sigma(f(t))$ is increasing for $t<t^*$ and decreasing for $t>t^*$.
				\item	If, on the other hand, $  t^* $ does not exists, then  $\Sigma(f(t))$ is a strictly monotone function.								
			\end{itemize}
			This proves that  $\Sigma(f(t))$ is a quasi-concave function of $t$. Since composition with  an affine function preserves quasi-concavity, we can deduce that   $\Sigma(\Lambda_s)$  is a quasi-concave function of $\Lambda_s$ and that, if $\exists \Lambda^*_s$ such that $\nabla \Sigma(\Lambda^*_s)=0$ then this vector is unique .\\
			In the  second  step of the proof, we  need to show that  the  constraint $ C(\Lambda_s) = \rho^{c}_{total}-\rho^{nc}_{total}$   is also quasi concave.  This is done in a similar way as in the first step by considering  $ C(f(t))$. After computing the derivative of $C(f(t))$ with respect to $t$, we find that:$ \frac{\rm d C(f(t))}{\rm d t}<0$.
			Then the first constraint is  quasi concave. Using the same method, it is trivial to show that the second constraint is also quasi-concave.  
			In order to finish the  proof,  we need to show that  the optimal solution can be found with zero duality gap.
			This will be done using results on quasi-concave programming from \cite{hoven}.
			Since $\frac{\rm d (\rho^{c}_{total}-\rho^{nc}_{total})}{\rm d \lambda_{sk}} \neq 0, \forall k=1...N_c$ then, according to  the  Necessity {Theorem} in \cite{hoven}, any solution  of the optimization problem $(23)$ satisfies the  KKT conditions.
			We, now, distinguish between two case: 
			\begin{itemize}
				\item 	If $\exists \; \Lambda^*_s $ such that $\nabla \Sigma(\Lambda_s)_{\Lambda_s=\Lambda^*_s} =0 $ and  $\Lambda^*_s$ satisfies the constraints then,  $\Lambda^*_s$ is unique and it is a global optimum of $(23)$. The uniqueness of $\Lambda^*_s$, if it exists, was shown in the  first step of the proof.
				\item 	If  $ \nabla \Sigma(\Lambda_s) \geq 0 $, $ \forall \Lambda_s \; \text{such that}  \; \Lambda_s^{\dagger} \mathbf{1} \leq \lambda_{s_{max}} $, the sufficiency {Theorem} in \cite{hoven} is verified. Consequently, by  combining the necessity and sufficiency results, the optimal SBS density vector can be derived  using KKT.
			\end{itemize}}
			\textbf{Appendix D proof of Lemma 3:}\\
			First we need to prove that the objective function $\Omega$ is sub-modular.	
			We consider two SBSs allocations $X \; \text{and} \; Y$ such that  $X \subseteq Y$  and we need to prove that the marginal value of adding a new allocated SBS $l$ to cluster $i$  in $X$ and $Y$ verifies:
			\begin{align}\tag{44}
			\nonumber& \Omega\big(X \cup \left\lbrace y_{li}\right\rbrace \big) - \Omega\big(X  \big) \geq \Omega\big(Y \cup \left\lbrace y_{li}\right\rbrace \big) - \Omega\big(Y  \big).\nonumber
			\end{align}	
			Monotonicity is trivial since any new SBS allocation cannot decrease the value of the objective function. In order to show submodularity of the function, we  compare the marginal values of adding $y_{li}$ to $X$ and  $Y$. Here we consider $\Pi_i(X \cup \left\lbrace y_{li}\right\rbrace)$ referring to the  users  that change their serving SBS from cluster  $i$. $\mu(u,i)$ refers to the index of  the SBS from cluster $i$ serving user $u$.     A user changes its serving SBS when the new allocated one is closer which induces less transmit power. Consequently, the marginal values of adding $y_{li}$ to $X$ and $Y$ are as follows:
			\begin{align}\tag{45}
			\nonumber &	\Omega\big(X \cup \left\lbrace y_{li}\right\rbrace \big) - \Omega\big(X  \big) = \sum_{u \in \Pi_i(X \cup \left\lbrace y_{li}\right\rbrace)}^{}  \big(\sum_{f \in \Delta_i}^{}  p_{fu}    \big)\\
			\nonumber & \times \big(  \omega^{(X \cup \left\lbrace y_{li}\right\rbrace)}_{u\mu(u,i)} - \omega^{(X )}_{u\mu(u,i)} \big),\\\nonumber
			\nonumber &	\Omega\big(Y \cup \left\lbrace y_{li}\right\rbrace \big) - \Omega\big(Y  \big) = \sum_{u \in \Pi_i(Y \cup \left\lbrace y_{li}\right\rbrace)}^{}  \big(\sum_{f \in \Delta_i}^{}  p_{fu}    \big)\\
			\nonumber &	\times \big(  \omega^{(Y \cup \left\lbrace y_{li}\right\rbrace)}_{u\mu(u,i)} - \omega^{(Y )}_{u\mu(u,i)} \big).\nonumber
			\end{align}
			Since $X \subseteq Y$  we can deduce that $\Pi_i(Y \cup \left\lbrace y_{li}\right\rbrace) \subseteq \Pi_i(X \cup \left\lbrace y_{li}\right\rbrace)$. Since a user changes its serving SBS only when a closer allocated one is available then  $\omega^{(Y \cup \left\lbrace y_{li}\right\rbrace)}_{u\mu(u,i)} - \omega^{(Y )}_{u\mu(u,i)} > 0$ which proves that $  \Omega\big(X \cup \left\lbrace y_{li}\right\rbrace \big) - \Omega\big(X  \big) \geq \Omega\big(Y \cup \left\lbrace y_{li}\right\rbrace \big) - \Omega\big(Y  \big)  $.
			Consequently, $\Omega$ is a sub-modular set function.
			It is simple to verify that the constraints  $\sum_{s=1}^{N_s} y_{sk} \leq N_{sk}  , \forall k= 1..N_c$ are equivalent to a matroid constraints \cite{femto}. 	Then the considered optimization problem is equivalent to maximizing a sub-modular function subject to matroid constraints.

		\end{document}